\DeclareFontFamily{U}{mathx}{\hyphenchar\font45}
\DeclareFontShape{U}{mathx}{m}{n}{
      <5> <6> <7> <8> <9> <10>
      <10.95> <12> <14.4> <17.28> <20.74> <24.88>
      mathx10
      }{}
\DeclareSymbolFont{mathx}{U}{mathx}{m}{n}
\DeclareMathSymbol{\bigtimes}{1}{mathx}{"91}
\newtheorem{thm}{Theorem}
\newtheorem{lemma}[thm]{Lemma}
\newtheorem{example}[thm]{Example}
\newtheorem{definition}[thm]{Definition}
\newtheorem{prop}[thm]{Proposition}
\newtheorem{remark}[thm]{Remark}
\newtheorem{proposition}[thm]{Proposition}
\newtheorem*{prop*}{Proposition}
\DeclareMathOperator{\tr}{tr}
\DeclareMathOperator{\SWAP}{SWAP}
\newcommand{\ot}[0]{\otimes}
\newcommand{\nn}[0]{\nonumber}
\newcommand{\one}[0]{\mathds{1}}
\newcommand{\id}[0]{\operatorname{id}}
\newcommand{\bra}[1]{\mathinner{\langle #1|}}
\newcommand{\ket}[1]{\mathinner{|#1\rangle}}
\newcommand{\dyad}[1]{| #1\rangle \langle #1|}
\renewcommand{\a}{\alpha}
\renewcommand{\>}{\rangle}
\newcommand{\overbar}[1]{\mkern 1.5mu\overline{\mkern-1.5mu#1\mkern-1.5mu}\mkern 1.5mu}
\begin{document}

\title{Positive maps from the walled Brauer algebra}

\author{Maria Balanzó-Juandó}
\email{maria.balanzo@icfo.eu}
\thanks{MBJ was supported by 
the European Union 
(Marie Skłodowska-Curie 847517), 
the Government of Spain (FIS2020-TRANQI, Severo Ochoa CEX2019-000910-S), 
Fundaci\'o Cellex, 
Fundaci\'o Mir-Puig, 
and the Generalitat de Catalunya (CERCA, AGAUR SGR 1381).}
\affiliation{ICFO – Institut de Ciencies Fotoniques, The Barcelona Institute of Science and Technology, 08860 Castelldefels (Barcelona), Spain}

\author{Micha\l{} Studzi\'nski}
\thanks{MS acknowledges  support by the National Science Centre grant Sonata 16 (2020/39/D/ST2/01234).}
\affiliation{Institute of Theoretical Physics and Astrophysics and National Quantum Information Centre in Gda{\'n}sk,
		Faculty of Mathematics, Physics and Informatics, University of Gda{\'n}sk, PL-80-952 Gda{\'n}sk, Poland}
		
\author{Felix Huber}
\thanks{FH was supported by the FNP through TEAM-NET (POIR.04.04.00-00-17C1/18-00).}
\affiliation{Atomic Optics Department, Jagiellonian University, PL-30-348 Kraków, Poland}
\affiliation{Bordeaux Computer Science Laboratory (LaBRI), University of Bordeaux, 351 cours de la Liberation, 33405, Talence, France}

\begin{abstract}
    We present positive maps and matrix inequalities 
    for variables from
    the positive cone.
    These inequalities contain partial transpose and reshuffling operations,
    and can be understood as positive multilinear maps
    that are in
    one-to-one
    correspondence with elements from
    the the walled Brauer algebra.
    Exploring the entanglement structure of multipartite states, we relate them to different separability classes
\end{abstract}
\maketitle

\section{Introduction}
Entanglement is one of the defining features of quantum many-body systems,
and its detection and characterisation presents us an ongoing challenge~\cite{
Ketterer2020entanglement, 
PhysRevLett.124.200502,
Neven2021,
PhysRevA.104.042420,
Knips2020,
frerot2021unveiling,
Hiesmayr2021,
Marconi2021entangledsymmetric}.
The key methods for entanglement detection, 
entanglement witnesses and positive maps, 
rely on our understanding of the 
mathematical features of multilinear algebra.
It is thus desirable to establish a dictionary for translating results between the two fields.

In algebraic geometry, various Positivstellensätze exist for polynomials and trace polynomials
whose variables are matrices~\cite{
10.2307/3597203, 
10.2307/3844994,
Helton2012,
klep2020optimization}. However,
few methods are applicable for variables from the positive cone, i.e. for positive semidefinite matrices.

Recently a one-to-one correspondence between
inequalities for matrix trace polynomials and
Werner state witnesses 
was established~\cite{Huber2021}.
One the one hand, the formalism opened the way to perform dimension-free entanglement detection, 
a method that is independent of the local Hilbert space dimension~\cite{https://doi.org/10.48550/arxiv.2108.08720}.
On the other hand, it offers a
systematic computational method to prove matrix inequalities from the positive cone, thus complementing methods from algebraic geometry.

In quantum information and computation, more general matrix operations like the partial transpose, partial trace, and
reshuffling are often considered. 
It is then a natural question to ask: which "polynomials" are positive with respect to
this new set of operations? 
Our manuscript addresses this question for the case such polynomials that are multilinear
and provides key tools for their manipulation. Additionally, it integrates their construction into earlier concepts from
entanglement theory and multilinear algebra.

This enlarges the correspondence from Werner state witnesses to witnesses from the {\em walled Brauer algebra},
formed by partially transposed permutation operators
acting on $(\mathbb{C}^d)^{\ot n}$~\cite{Eggeling2001}.
This algebra is the commutant of the adjoint action of $U^{\ot (n-k)} \ot {\bar{U}}^{\ot k}$, 
with $U$ unitary and $\bar{U}$ its complex conjugate.
Our framework allows to port previously known results on the entanglement of states with partial positive transpose~\cite{Eggeling2001} into the domain of matrix inequalities,
and translates positive multilinear maps~\cite{Bardet2020} back into entanglement witnesses. 
Furthermore, it provides a convenient framework for working with a class of 
nonlinear transformation of quantum states, 
rendering it of interest for nonlinear quantum computation~\cite{holmes2021nonlinear} and equivariant quantum neural networks.

%Sections
Section~\ref{sec:basicconcepts} introduces the key concepts of this paper. Section~\ref{sec:tripartitesystems} treats examples in tripartite scenarios by
discussing a map by Bardet-Collins-Sapra and the results by Eggeling and Werner on $U^{\ot 3}$-invariant states with positive partial transpose (PPT).
Section~\ref{sec:MultiMapsBlocPos} then explores multipartite maps and their positivity in general,
while Section~\ref{sec:irrepssection} constructs 
positive multilinear maps induced by irreducible representations of the 
commutant of $U^{\ot (n-k)} \ot {\bar{U}}^{\ot k}$. 
Appendix~\ref{appA:rep_th} contains a short summary on the representation theory of the walled Brauer algebra,
Appendix~\ref{app:wernermaps} provides a table of  positive maps, and 
Appendix~\ref{app:proofs} contains some technical proofs. Finally,  Appendix~\ref{app:graphicaltrick} explains graphical notations for working with permutations and their partial transposes.

\section{Basic concepts}\label{sec:basicconcepts}

The set of complex $d\times d$ matrices is  $\mathcal{M}_d$. 
The \emph{positive cone} is the set of complex positive-semidefinite $d\times d$ matrices.
An operator $A$ is positive semidefinite (written as $A\geq 0$) if and only if $\tr(AB) \geq 0$ for all $B \geq 0$ holds. 
This property is also known as the \emph{self-duality of the positive cone}. 
A linear map $\Lambda: \mathcal{M}_{d_1}\to \mathcal{M}_{d_2}$ is \emph{positive} if $\Lambda(X) \geq 0$ for all $X \geq 0$ and $X\in\mathcal{M}_{d_1}$.

A map $\Lambda: \mathcal{M}_{d_1}^r\to \mathcal{M}_{d_2}$ is \textit{multilinear} if it is linear in each variable. We call that a multilinear map $\Lambda$ is positive, if $\Lambda(X_1,\ldots,X_n)\geq 0$ for all $X_1,\ldots,X_n\geq 0$.

\subsection{Matrix operations}\label{sec:tricksandconcepts}

In quantum mechanics, the \emph{partial trace} yields the reduced or local description of quantum states. 
The coordinate-free definition of the partial trace $\tr_1$ states that it is the unique linear operator satisfying
\begin{equation}\label{eq:ptrace_def}
    \tr\big[M(\one\ot N)\big]=\tr\big[\tr_1(M)N\big],
\end{equation}
for all operators $M$ and $N$ acting on Hilbert spaces $\mathcal{H}_1\otimes\mathcal{H}_2$ and $\mathcal{H}_2$, respectively.

We will use permutation operators frequently. These generalise the well-known swap operator $\SWAP$ (in shorthand, $(12)$),
which interchanges the two tensor factors of $\mathbb{C}^d \ot \mathbb{C}^d$ as $(12)\ket{\psi} \ot \ket{\phi} = \ket{\phi}\ot \ket{\psi}$, 
for all $|\phi\rangle,|\psi\rangle\in\mathbb{C}^d$. 
An arbitrary permutation $\sigma \in S_n$ acts on $(\mathbb{C}^d)^{\otimes n}$ as
\begin{equation}
 \sigma \ket{v_1} \ot \dots \ot \ket{v_n} =  \ket{v_{\sigma^{-1}(1)}} \ot \dots \ot \ket{v_{\sigma^{-1}(n)}}\,.
\end{equation}
Throughout the text we will use the cyclic notation for permutations, 
so that $(134)$ maps $1\to 3$, $3\to 4$, and $4\to 1$ while $2 \to 2$ remains. 
For the identity permutation we write $\id$ or $\one$. 
We slightly abuse notation, and
both elements of the symmetric group and their 
representations on $(\mathbb{C}^d)^{\otimes n}$ 
are denoted by $\sigma$.
The distinction will be clear from the context. 

Let us now combine these concepts. For example, the `swap trick' states that 
$ \tr[(12)A\ot B]=\tr(AB)\label{traceSWAP}$ for all square matrices $A,B$ of equal size. 
This idea was generalised in Ref.~\cite{Huber2021} to
\begin{equation}
    \tr_{12}\big[(321)A\ot B\ot C\big]=ABC \label{productmatrices}\,,
\end{equation} 
where now the partial trace translates a permutation into a corresponding matrix product.

The \emph{partial transpose} ${(\cdot)}^{T_1}$
is defined as the linear extension of the ordinary matrix transposition $(\cdot)^T$ with respect to a given basis $\langle i|A^T |j\rangle := \bra{j}A \ket{i}$. Namely, we have 
\begin{equation}
    T_1:\;|\underline{i} \rangle\langle \underline{j}|\otimes |k\rangle\langle l|\mapsto |\underline{j}\rangle\langle \underline{i}|\otimes |k\rangle\langle l|\,.
\end{equation}
With respect to the second subsystem the partial transpose is defined analogously. % as
The partial transpose relates the swap operator to the Bell state $\ket{\phi^+}=\frac{1}{\sqrt{d}}\sum_{i=1}^d\ket{ii}$ through $(12)= d  \dyad{\phi^+}^{T_1}$.

New interesting relations appear when considering partial transposes. 
For instance,
\begin{equation}\label{eq:concept_ptranspose_swap}
    \tr_{1}\big[(12)^{T_1}A\ot B \big]=A^TB\,.
\end{equation}
That is, a partially transposed permutation operator translates into 
the transposition of a variable in the output product. 

When dealing with more tensor factors, one obtains the {\em reshuffling}, also known as realignment, which can be seen as transposition 
that involves the indices of different tensor factors.
The reshuffling $(\cdot)^R$ interchanges the bra of the first tensor factor with the ket of the second tensor factor, 
\begin{equation}
    R:\;|i \rangle\langle \underline{j}|\otimes |\underline{k}\rangle\langle l|\mapsto |i\rangle\langle \underline{k}|\otimes |\underline{j}\rangle\langle l|\,.
\end{equation}
An example how the this operation appears in a linear maps is
\begin{equation}
    \tr_{1}\big[(123)^{T_2}A\ot \one\ot\one\big]=(\one\ot A)^R\,.
\end{equation}
This reshuffling operation appears in the reshuffling criterion to detect entanglement
(also known as realignment or computable cross-norm criterion) \cite{Chen2003, Rudolph2005}.

Going beyond $3$ tensor factors, an example is
\begin{equation}\label{eq:re3}
    (A^R B^R)^R = \tr_{23}\big[ \one \ot (23)^{T_3} \ot \one (A \ot B) \big].
\end{equation}

A graphical representation of the partial trace, partial transposition, and reshuffling operations can be seen in Figure~\ref{fig:partialtransposition.pdf}.

\subsection{Entanglement}

A quantum state (density matrix) is a positive semidefinite hermitian matrix $\varrho$ of trace one. 
A bipartite state $\varrho^{AB}$ is called \emph{separable} (or classically correlated) 
if it can be written as a convex combination of product states, 
\begin{equation}\label{eq:sep}
    \varrho^{AB}=\sum_j p_j \varrho_j^A\otimes \varrho_j ^B\,, 
\end{equation}
where $p_i \geq 0\,, \sum_j p_j = 1$  and $\varrho_j^A$, $\varrho_j^B$ are density matrices on systems $A$ and $B$ respectively. 
When a state cannot be written as in Eq.~\eqref{eq:sep} is called \emph{entangled}. 

Multipartite systems show a richer entanglement structure. 
In the tripartite scenario,
a state of three particles $\varrho^{ABC}$ is \emph{fully separable} if it can be written as
\begin{equation}\label{fullyseparable}
    \varrho^{ABC}=\sum_j p_j \varrho^{A}\otimes \varrho^{B} \otimes \varrho^{C}.
\end{equation}
If $\varrho^{ABC}$ cannot be written in this form, it is entangled. While some states cannot be written as in Eq.~\eqref{fullyseparable}, they admit a biseparable decomposition across the bipartition $A|BC$
\begin{equation}
    \varrho^{A|BC}=\sum_j p_j \varrho^{A}\otimes \varrho^{BC}\,.
\end{equation}
Separability across the bipartitions $B|AC$, and $C|AB$
is defined analogously.  A state is said to be \emph{biseparable} if it can be written as a convex combination of biseparable states,
\begin{equation}
    \varrho^{ABC}=p_a \varrho^{A|BC}+p_b \varrho^{B|AC}+p_c \varrho^{C|AB}\,.
\end{equation}
A state $\varrho^{ABC}$ is  \emph{genuinely multipartite entangled} if it is not biseparable. 
It is interesting to note that there exist states which are biseparable for every bipartition, but that are not fully separable \cite{Bennett99,Acin2001}.
These different separability classes are shown in Fig.~\ref{fig:tri_sets.pdf}.

\begin{figure}[t]
  \centering
  \includegraphics[width=\columnwidth]{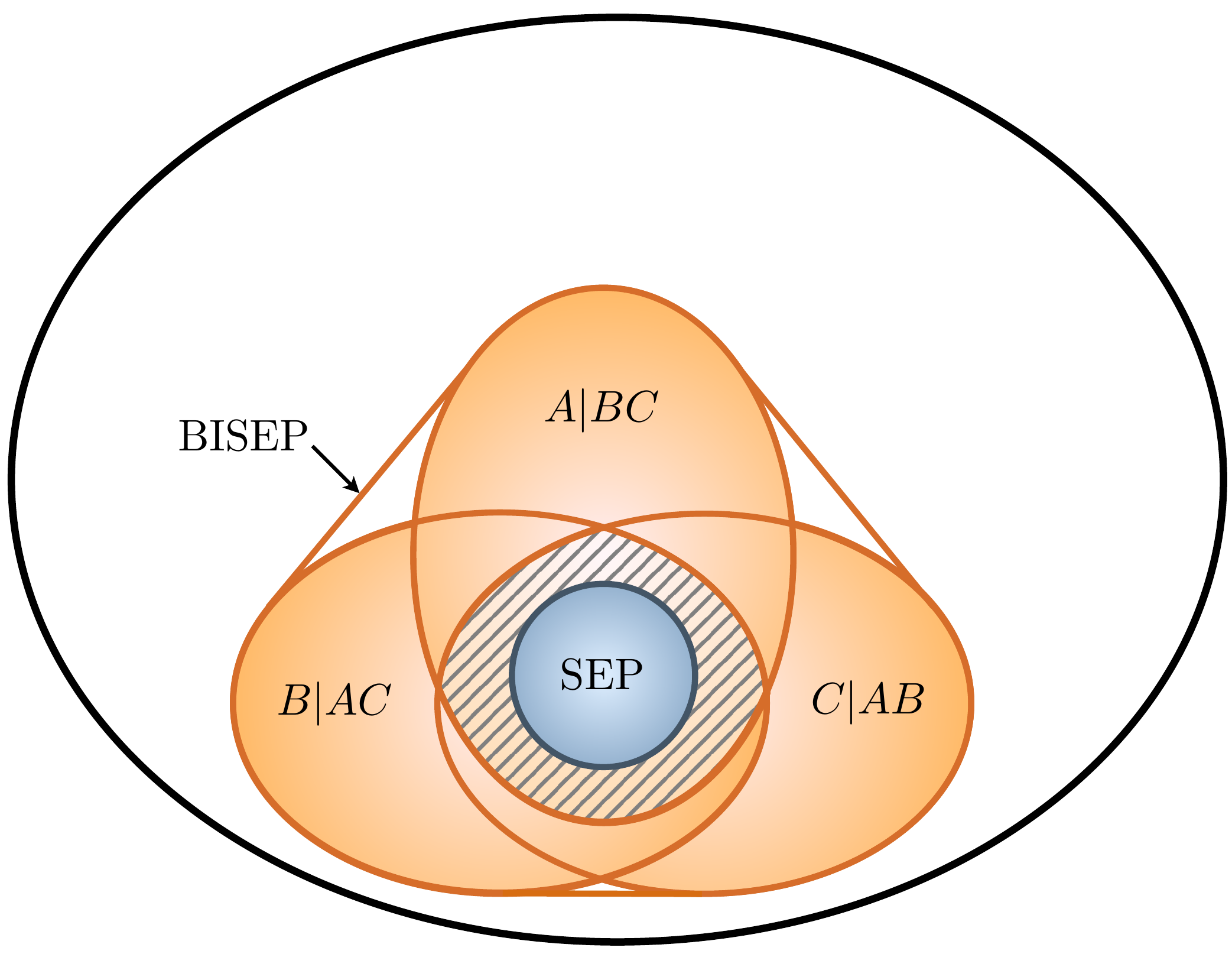}
  \caption{
  Separability classes of all tripartite states. The set of fully separable states $\text{SEP}$ (blue) is strictly included in the union of the biseparable sets $A|BC\,,B|AC\,,C|AB$. Thus there are states that are biseparable for every possible bipartition, 
  but which are not fully separable \cite{Bennett99,Acin2001} (shaded region). States belonging to the convex hull of $A|BC\,,B|AC\,,C|AB$ are called biseparable, $\text{BISEP}$.}
  \label{fig:tri_sets.pdf}
\end{figure}

Determining whether a state is entangled or not is a computationally difficult task~\cite{GURVITS2004448}. 
In the bipartite scenario, a necessary condition for a bipartite density matrix state to be separable
is the positivity of its partial transpose~\cite{Peres1996}; 
a criterion that is also sufficient in $\mathbb{C}^2 \ot \mathbb{C}^2$ and $\mathbb{C}^2 \otimes \mathbb{C}^3$~\cite{Horodeckis1996}.

A more general method for entanglement detection is to use entanglement witnesses~\cite{Terhal2000, Chru_ci_ski_2014}. These are operators $\mathcal{W}$ for which $\tr(\mathcal{W}\varrho)\geq 0$ holds for all separable states $\varrho$, and $\tr(\mathcal{W}\sigma)< 0$ for some entangled state $\sigma$. In words, an observable $\mathcal{W}$ is an \emph{entanglement witness} if its expectation value is non-negative on the set of separable states, while having at least one negative eigenvalue. However, no systematic analytical method is known for constructing entanglement witnesses, as this would amount to being able to solve the separability problem.

An operator $\mathcal{V}$ is a \emph{block-positive} if $\tr(\mathcal{V}\varrho)\geq 0$ for all separable states. Similarly, it is block-positive for a partition $A_1|A_2|\ldots|A_\ell$ if
\begin{equation}
 	\bra{\phi_{A_1}} \bra{\psi_{A_2}} \dots \bra{\phi_{A_\ell}} \mathcal{V}\ket{\phi_{A_1}} \ket{\psi_{A_2}} \dots \ket{\phi_{A_\ell}} \geq 0\,,
\end{equation}
for all $\ket{\phi_{A_1}},\ket{\psi_{A_2}}, \dots, \ket{\phi_{A_\ell}}$.

The Choi-Jamio{\l}kowski isomorphism relates linear maps to operators,
\begin{equation}
    \Lambda_\Phi(\varrho) = \tr_1\big[ \Phi (\rho^T \ot \one) \big]\,.
\end{equation}
It is known that $\Lambda_\Phi$ is completely positive if and only if $\Phi$ is positive semidefinite.
It is easy to see that a similar relation holds between positive maps and bipartite block-positive operators through
\begin{align}\label{eq:JI_wit}
\tr\big[ \Lambda_\Phi (\varrho) C\big] 
&= 
\tr \Big[
\tr_1\big( \Phi (\rho^T \ot \one) \big) C 
\Big]  \nonumber\\
&=\tr \big[ \Phi (\rho^T \ot C) \big] \,,
\end{align}
where we used the defining property of the partial trace~\eqref{eq:ptrace_def}.
By Eq.~\eqref{eq:JI_wit} and the self-duality of the positive cone, $\Lambda$ is a positive map if and only if $\Phi$ is block-positive (that is, $\Phi$ either a witness or positive semidefinite). 

This relation extends to multilinear positive maps and multipartite block-positive operators.
Write $\Lambda(X_1,\dots,X_n)$ as,
\begin{align}\label{eq:multilinearmap}
    \Lambda(X_1,\dots,X_n)
    =
    \tr_{1\dots n}
    \big[\tilde\Phi(X_1\ot\dots\ot X_n\nonumber\\\ot\one\ot\dots\ot\one)
    \big]\,,
\end{align}
for some operator $\tilde\Phi$.
From
\begin{align}\label{eq:JI-multipartite}
    &\phantom{=} \tr\Big[
    \Lambda(X_1,\dots,X_n)C \Big] \nonumber \\
    &=
    \tr\Big[
    \tr_{1\dots n}
    \big[
    \tilde\Phi(X_1\ot\dots\ot X_n
    \ot\one\ot\dots\ot\one) 
    \big]
    \Big]\nonumber\\
    &=\tr\big[
    \tilde\Phi(X_1\ot\dots\ot X_n\ot C)
    \big]\,,
\end{align}
one sees that 
$\Lambda(X_1,\dots,X_n)$ is a multilinear positive map,  
if and only if 
the operator $\tilde\Phi$ is block-positive. 
We will use this trick throughout the paper.

\subsection{Walled Brauer algebra}
The representation theories of the symmetric group $S_n$ and the general linear group $GL_d(\mathbb{C})$ are related by the famous Schur-Weyl duality~\cite{Goodman}. Taking $V=\mathbb{C}^d$, the Schur-Weyl duality states that there exists a basis in which one has the following decomposition
\begin{align}
\label{wedd1}
(\mathbb{C}^d)^{\otimes n}=\bigoplus_{\alpha} \mathcal{U}_{\alpha}\otimes \mathcal{S}_{\alpha}\,,
\end{align}
where the symmetric group $S_n$ acts on the representation space $\mathcal{S}_{\alpha}$, and the general linear group $GL_d(\mathbb{C})$ acts on the representation space $\mathcal{U}_{\alpha}$, labelled by the same irreducible representation $\alpha$. 
From this decomposition it follows that the diagonal action of the general linear group $GL_d(\mathbb{C})$ of invertible complex matrices and of the symmetric group $S(n)$ on $V^{\otimes n}$ commute. 

We can identify the elements from $S_n$ as a permutation operators as it is described at the beginning of section~\ref{sec:tricksandconcepts}. 
It turns out that permutation operators can be represented and manipulated graphically, see Figure~\ref{fig:WBA}. 

However, there exists also another version of the Schur-Weyl duality exploiting the commutation relation of the natural representation of $GL_d(\mathbb{C})$ on mixed tensor product $V^{\ot n}\otimes \bar{V}^{\otimes (n-k)}$ of and elements forming so called walled Brauer algebra. This algebra has been in introduced by Tuarev~\cite{Tur}, Koike~\cite{Ko}, and Benkart et al~\cite{BEN}. 

The walled Brauer algebra $\mathcal{WBA}(n,d,k)$ with $n>k$ acting on $V=\mathbb{C}^d$ is formed by linear combinations 
of permutation operators 
that are partially transposed over last $k$ systems,
\begin{equation}
    \mathcal{WBA}(n,d,k) = \left\{ \sum_{\sigma \in S_n} a_\sigma \sigma^{T_{k\dots n}} \,|\, a_\sigma \in \mathbb{C} \right\}.
\end{equation}
Similarly to the permutation operators, the $\mathcal{WBA}$ 
operators admit a graphical representation which makes them easy to manipulate (see Figure~\ref{fig:WBA}).

\begin{figure}[tbp]
\begin{center}
   \includegraphics[width=0.8\columnwidth]{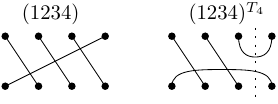}
\caption{Graphical representation of the permutation operator $(1234)$ --left-- and the representation of the partially transposed permutation operator $(1234)^{T_4}$ --right--. The vertical dotted line depicts the 'wall' in the walled Brauer algebra. This wall separates transposed elements from the untouched ones.  For more details on this graphical representations see Appendix \ref{app:graphicaltrick}.}
\label{fig:WBA}
\end{center}
\end{figure}

While superficially similar, the partial transposition makes the group algebra 
$\mathbb{C}[S_n] 
= \big\{ \sum_{\sigma \in S_n} a_\sigma \sigma \,|\, a_\sigma \in \mathbb{C} \big\}$ and $\mathcal{WBA}(n,d,k)$ structurally quite different. 
This can be seen by considering the SWAP operator $(12) \in \mathbb{C}[S_2]$. 
The SWAP is its own inverse element and forms together with the neutral element the symmetric group $S_2$. 
Partially transposing $(12)$ yields an operator $(12)^{T_2}$ proportional to the projector on maximally entangled state
and belongs to $\mathcal{WBA}(2,d,1)$.
One can see that $(12)^{T_2}(12)^{T_2}=d(12)^{T_2}$, 
implying that that $\mathcal{WBA}(2,d,1)$ is no longer a group algebra. A similar argument applies to an arbitrary $\mathcal{WBA}(n,d,k)$. 
Further details on the walled Brauer algebra can be found in Appendix~\ref{appA:rep_th} and references therein.

\section{Tripartite scenarios}\label{sec:tripartitesystems}
Here we illustrate our formalism in the tripartite setting,
relating it to $U \ot U \ot U$-invariant PPT states~\cite{Eggeling2001} and a recently introduced class of positive maps~\cite{Bardet2020}. 
In particular, we illustrate that the considered maps correspond to elements of $\mathbb{C}[S_n]$ and $\mathcal{WBA}(n,d,k)$,
motivating us to study the 
maps generated by these algebras further.

\subsection{The Bardet-Collins-Sapra map}
How does one turn a positive map into a witness for {\em tripartite} states? Consider the map introduced by Bardet, Collins, and Sapra~\cite{Bardet2020},
\begin{align}
    \Phi_{\alpha,\beta,d}: \mathcal{M}_d(\mathbb{C})&\to \mathcal{M}_d(\mathbb{C})\otimes \mathcal{M}_d(\mathbb{C})\nonumber\\
    A &\mapsto A^T \otimes \one_d + \one_d\otimes A\nonumber\\
    &\quad\,\, +\tr(A)\big(\alpha \one_d \ot \one_d +\beta \phi^+\big)\,,
\end{align}
where $\alpha,\beta\in\mathbb{R}$, $d\geq 3$, and $\phi^+$ is the unnormalised Bell state. The authors show that $\Phi_{\alpha,\beta,d}$ is a positive map if and only if the following conditions hold
\begin{enumerate}
    \item \label{itm:firstcondition}$\alpha\in[0,\infty)$ if $\beta\geq0$,
    \item \label{itm:secondcondition}$\alpha\in\left[\frac{-(2+d\beta)+\sqrt{d^2\beta^2-4(d-2)\beta + 4}}{2},\infty\right)$ if $\beta\leq 0$.
\end{enumerate}

Using Eq.~\eqref{productmatrices} and Eq.~\eqref{eq:concept_ptranspose_swap}
from Section~\ref{sec:tricksandconcepts}, we can write this map as
\begin{align}
    \Phi_{\alpha,\beta,d}:\quad \mathcal{M}_d(\mathbb{C})&\to \mathcal{M}_d(\mathbb{C})\otimes \mathcal{M}_d(\mathbb{C})\nonumber\\
    A &\mapsto \tr_1 (P(\alpha,\beta) A\otimes \one_d\otimes \one_d)\,,
\end{align}
where $P(\alpha,\beta)$ is a linear combination of permutation operations and their partial transposes,
\begin{align}
    P(\alpha,\beta)=(12)^{T_2}+(13)+\alpha \id+\beta (23)^{T_2}\,.
    \label{P(a,b)}
\end{align}
When $\alpha$ and $\beta$ take values satisfying conditions ~\ref{itm:firstcondition} and~\ref{itm:secondcondition}, the operator $P(\alpha,\beta)$ is block-positive in the bipartition $1|23$. 
To see that we use an argument as in Eq.~\eqref{eq:JI_wit}: the self-duality of the positive cone states that an operator $A$ is positive if and only if $\tr(AC)\geq 0$ holds for all $C\geq 0$. Therefore
\begin{align}
\label{eq:BCS_proof}
    \tr\left[\tr_1 (P(\alpha,\beta) A\otimes \one_d\otimes \one_d) C\right] \nn\\
    =\tr\left[(P(\alpha,\beta) A\otimes C \right]\geq 0\,,
\end{align}
where we have used the coordinate-free definition of the partial trace. Note that $C$ in Eq.~\eqref{eq:BCS_proof} acts on $\mathbb{C}^d \ot \mathbb{C}^d$.

To obtain an entanglement witness from  $\Phi_{\alpha,\beta,d}$, we need to determine the values $\alpha, \beta$ for which the operator $P(\alpha,\beta)$ is {\em not} positive semidefinite, but fulfills the conditions for the map to be positive. This yields an entanglement witness for the bipartition $1|23$.

Let us construct a witness: consider condition \ref{itm:secondcondition} for the case $d=3$. The map $\Phi_{\alpha,\beta,d}$ is positive for
\begin{equation}
    \alpha\in\left[\frac{-(2+3\beta)+\sqrt{9\beta^2-4\beta+4}}{2},\infty\right),\quad \beta\leq 0\,.
\end{equation}
Fig.~\ref{fig:Negativity} shows the values of $\alpha$ and $\beta$ for which the operator $P(\alpha,\beta)$ acts as a witness.

\begin{figure}[tbp]
\begin{center}
   \includegraphics[width=\columnwidth]{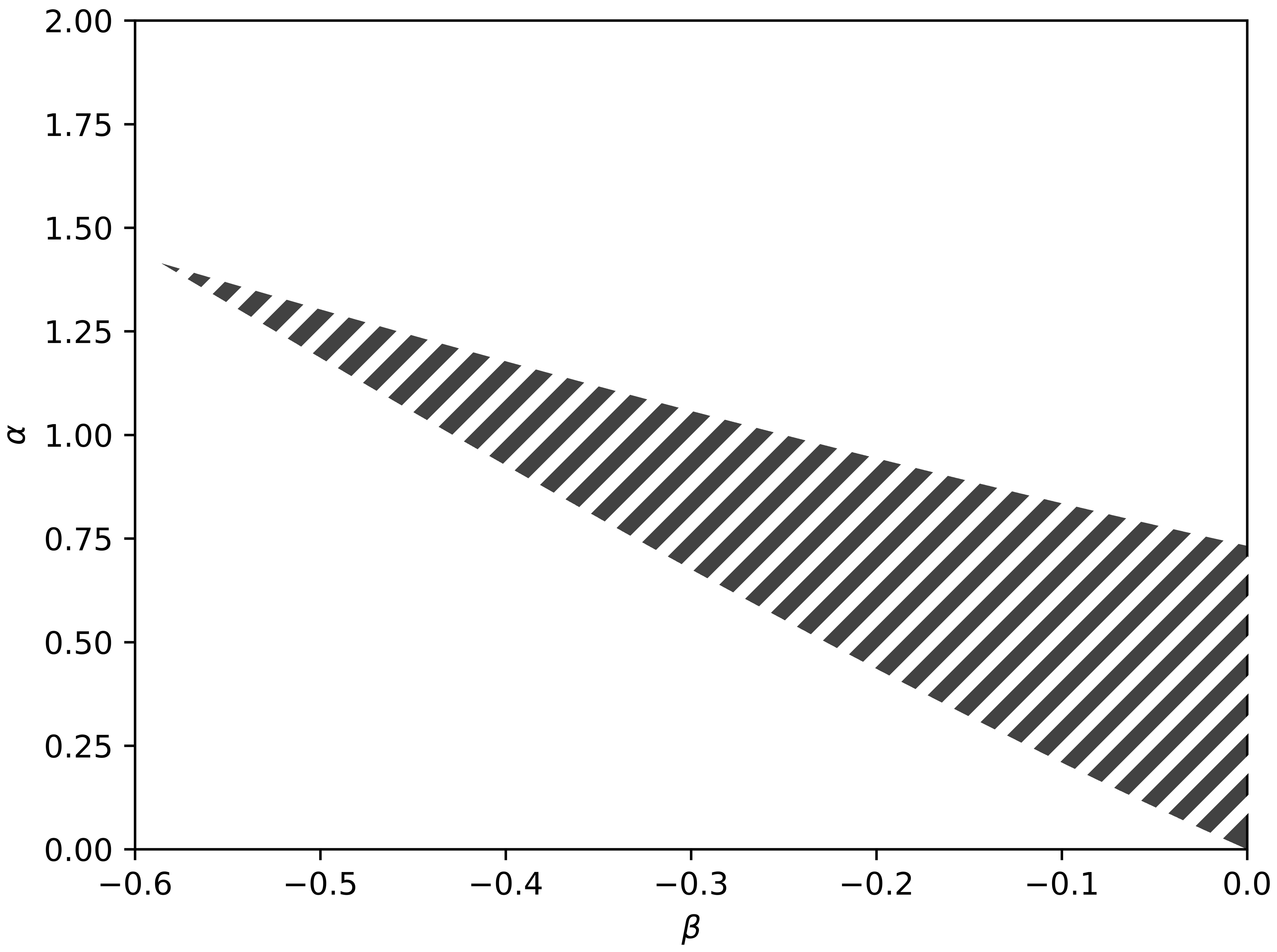}
\caption{The striped region represents the pairs of $(\alpha,\beta)$ for which $P(\alpha,\beta)$ of Eq.~\eqref{P(a,b)} is a witness of entanglement in the bipartition $1|23$. That is, these are the points for which the map is positive according to condition \ref{itm:secondcondition} of \cite{Bardet2020}, and for which the operator $P(\alpha,\beta)$ is not positive semidefinite.}
\label{fig:Negativity}
\end{center}
\end{figure}
For instance, taking $\beta=-1/10$
condition~\ref{itm:secondcondition} requires that
$\alpha\in[\approx 0.21,\infty)$. 
A possible value is $\alpha=1/4$, and the operator $P(1/4,-1/10)$ then reads 
\begin{equation}
    P(1/4,-1/10)=(12)^{T_2}+(13)+\frac{1}{4}\id-\frac{1}{10}(23)^{T_2}\,.
    \label{Pwitness}
\end{equation}
Clearly, this operator is non-zero, and by the preceding discussion acts an entanglement witness on the bipartition $1|23$ for the $d=3$ case.

\subsection{Eggeling-Werner maps} 
Here we give another tripartite example and show how to derive matrix inequalities from states with a positive partial transpose (PPT states). 

Every $U\ot U\ot U$-invariant state $\varrho$ (also known as Werner states) on $(\mathbb{C}^d)^{\ot 3}$ can be written as a linear combination of permutation operators~\cite{Eggeling2001}. Namely,
\begin{align}\label{eq:werner_para}
    \varrho&=\alpha_1\id+\alpha_2(12)+\alpha_3(23) +\alpha_4(31)\nn\\
    &\quad\,\, +\alpha_5(123)+\alpha_6 (321)\nn\\
    &=\sum_{k=+,-,0,1,2,3} c_k R_k\,,
\end{align}
where $R_k$ are the following operators
\begin{align}\label{Rks}
    R_+&=\frac{1}{6}\left[\id+(12)+(23)+(31)+(123)+(321)\right],\nn\\
    R_-&=\frac{1}{6}\left[\id-(12)-(23)-(31)+(123)+(321)\right],\nn\\
    R_0&=\frac{1}{3}\left[2\cdot\id-(123)-(321)\right],\nn\\
    R_1&=\frac{1}{3}\left[2\cdot(23)-(31)-(12)\right],\nn\\
    R_2&=\frac{1}{\sqrt{3}}\left[(12)-(31)\right],\nn\\
    R_3&=\frac{i}{\sqrt{3}}\left[(123)-(321)\right]\,.
\end{align}
Admissible values of coefficients $\alpha_k$ and $c_k$ for $\varrho$ to form a density matrix can be found in the Appendix~\ref{app:wernermaps}. 

The below Lemma from  Ref.~\cite{Eggeling2001} states the conditions for a Werner state to have a positive partial transpose.
\begin{lemma}[Lemma 10 of Ref.~\cite{Eggeling2001}]\label{wernerlemma}
Let $\varrho$ be a $U^{\ot 3}$-invariant state with $r_k=\tr(\varrho R_k)$, $k=+,-,1,2,3.$ Then 
$\varrho^{T_1}\geq 0$ if and only if
\begin{subequations}
\begin{align}
    0&\leq r_-\\
    0&\leq r_1-r_+-r_-+1   \\
    0&\leq 1-r_1-5r_--r_+  \allowdisplaybreaks\\
    0&\leq -1-r_1+r_-+5r_+    \allowdisplaybreaks\\
    r_2^2+r_3^2&\leq F_1 \\
    r_2^2+r_3^2&\leq F_2\,,
\end{align}
\label{conditionsrhopositive}
\end{subequations}
where $F_1=(1-r_1-5r_--r_+)(-1-r_1+r_-+5r_+)/3$ and $F_2=(1-r_1-r_--r_+)(1+r_1-r_--r_+)/3$.
\end{lemma}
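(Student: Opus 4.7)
The plan is to exploit the $U^{\ot 3}$-symmetry of $\varrho$. Since the partial transpose preserves the action of $\bar U\ot U\ot U$, the operator $\varrho^{T_1}$ lies in the commutant of $\bar U\ot U\ot U$, i.e.\ in the walled Brauer algebra $\mathcal{WBA}(3,d,1)$. By the Wedderburn theorem this algebra is a direct sum of matrix algebras, so positivity of $\varrho^{T_1}$ reduces to positivity on each simple block.

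First I would compute $\varrho^{T_1}$ explicitly as an element of $\mathcal{WBA}(3,d,1)$. Starting from $\varrho=\sum_k c_k R_k$, I apply $T_1$ term by term: the transposition $(23)$ is untouched, while $(12)^{T_1}$, $(13)^{T_1}$, $(123)^{T_1}$, $(321)^{T_1}$ become concrete elements of $\mathcal{WBA}(3,d,1)$ (the first being proportional to $d\,\dyad{\phi^+}\ot\one$, etc.). This yields $\varrho^{T_1}$ as a known linear combination of the six generators of $\mathcal{WBA}(3,d,1)$, with coefficients linear in $c_+,c_-,c_0,c_1,c_2,c_3$.

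Second, I would use the representation theory of $\mathcal{WBA}(3,d,1)$ summarised in Appendix~\ref{appA:rep_th} to block-diagonalise. The algebra splits into several one-dimensional blocks together with two-dimensional block(s) of the mixed $(2,1)$-type, in which the matrix elements $r_2\pm i r_3$ appear off the diagonal. Using $r_k=\tr(\varrho R_k)$ and $\tr\varrho=1$ to eliminate $r_0$, I would express every block matrix of $\varrho^{T_1}$ explicitly in terms of $r_+,r_-,r_1,r_2,r_3$.

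Finally, the claimed positivity conditions follow block by block. Each one-dimensional block is non-negative precisely when its scalar eigenvalue is, producing the four linear inequalities (a)--(d); each $2\times 2$ block is positive semidefinite precisely when its diagonal entries are non-negative (these turn out to be either already among (a)--(d) or automatic from $\varrho\ge 0$) \emph{and} its determinant is non-negative, which gives $r_2^2+r_3^2\le F_j$. Two such quadratic conditions appear because $\mathcal{WBA}(3,d,1)$ contains two inequivalent $2\times 2$ blocks that both carry the $R_2,R_3$ content. The main obstacle is the bookkeeping of this decomposition: identifying the central idempotents of $\mathcal{WBA}(3,d,1)$, tracking the factors of $d$ coming from the non-group structure of the algebra (e.g.\ $(12)^{T_1}(12)^{T_1}=d\,(12)^{T_1}$), and verifying that the two block determinants agree with the explicit expressions $F_1$ and $F_2$.
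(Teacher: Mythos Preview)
The paper does not supply its own proof of this lemma: it is quoted verbatim as Lemma~10 of Ref.~\cite{Eggeling2001} and used as input for the discussion of the maps $f_S$ and $g_S$. There is therefore no in-paper argument to compare against.

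That said, your sketch is the right strategy and is essentially the one carried out in the original Eggeling--Werner paper. The key observation that $\varrho^{T_1}$ lies in the commutant of $\bar U\otimes U\otimes U$, hence in a walled Brauer algebra, is correct and is exactly what makes the problem finite-dimensional in the parameters $r_k$. Block-diagonalising via Artin--Wedderburn and reading off scalar and $2\times 2$ positivity conditions is precisely how the six inequalities arise: the linear ones from one-dimensional blocks, the quadratic ones $r_2^2+r_3^2\le F_j$ from the determinants of the two $2\times 2$ blocks in which $R_2$ and $R_3$ sit off-diagonally.

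The only point where your outline is slightly loose is the treatment of the diagonal entries of the $2\times 2$ blocks: for a $2\times 2$ Hermitian matrix, non-negative determinant together with non-negativity of \emph{one} diagonal entry (or of the trace) suffices, so you should check that at least one such entry in each block is already controlled by (a)--(d) or by $\varrho\ge 0$. This is indeed the case, but it needs to be said rather than assumed. The remaining work is, as you note, pure bookkeeping: tracking the $d$-dependent structure constants of $\mathcal{WBA}(3,d,1)$ and converting the $c_k$ into the $r_k$ via the relations in Appendix~\ref{app:wernermaps}.
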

\begin{remark}
The relation between $r_k$ and the coefficients $c_k$ and $\alpha_k$ of Eq.~\eqref{eq:werner_para} is given in Appendix~\ref{app:wernermaps}.
\end{remark}

To a $U^{\ot 3}$-invariant $\varrho$ we associate the following maps,
parameterised by $\a_1, \dots, \a_6$ of Eq.~\eqref{eq:werner_para},
\begin{align}\label{Wernermaps}
    f_S(A)   &=\tr_1[\varrho^{T_S}A\ot\one\ot\one] \,, \\
    g_S(A,B) &=\tr_{12}[\varrho^{T_S}A\ot B\ot\one]\,,\label{Wernermaps2}
\end{align}
where $T_S$ denotes partial transposition over $S \subsetneq \{1,2,3\}$.
Below we give examples how these maps look in terms of the discussed matrix operations.

\begin{example}
The partial traces and partial transposes involved can act on one or two subsystems,
\begin{align}
    f_3(A) &= \tr_1[\varrho^{T_3} A\ot\one\ot \one]\nn\\
    &=\alpha_1\tr(A)(\one\ot\one)+\alpha_2(A\ot\one)\nn\\
    &+\alpha_3\tr(A)(\one\ot\one)^R +\alpha_4(\one\ot A^T)\nn\\
    &+\alpha_5(A\ot\one)^R+\alpha_6(\one\ot A^T)^R\,,  \nn\\
    \nn\\
   f_{23}(A)  &= \tr_{1}[\varrho^{T_{23}} A \ot \one \ot \one ] \nn\\ 
   &=\alpha_1\tr(A)(\one\ot\one)+\alpha_2(A^T\ot\one)\nn\\
  &+\alpha_3\tr(A)((\one\ot\one)^R)^{T_2}+\alpha_4(\one\ot A^T)\nn\\
  &+\alpha_5((\one\ot A)^R)^{T_2}+\alpha_6((A^T\ot\one)^R)^{T_2}\,, 
  \nn\\
  \nn\\
   g_{3}(A,B)  &=  \tr_{12}[\varrho^{T_3}  A \ot B \otimes \one] \nn\\
   &= \alpha_1\tr(A)\tr(B)\one+\alpha_2\tr(AB)\one\nn\\
   &+\alpha_3\tr(A)B^T+\alpha_4\tr(B) A^T\nn\\
   &+\alpha_5A^TB^T+\alpha_6B^TA^T\,.
\end{align}
\end{example}
To obtain these maps, we have used the graphical notation explained in Appendix~\ref{app:graphicaltrick}.
Depending on the overlap between partial transpose and partial trace, different operations appear in the resulting map.
A table of all maps of this type is in the Appendix~\ref{app:wernermaps}. 

Can these maps be made stronger? 
In analogy to Eq.~\eqref{eq:JI_wit}, one sees that a block-positive operator suffices.
\begin{proposition}\label{wernermapspositive}
The maps $f_S$ are positive if and only if $\varrho^{T_S}$ is block-positive with respect to the partition $1|23$.
The maps $g_S$ are positive if and only if $\varrho^{T_S}$ is block-positive.
\end{proposition}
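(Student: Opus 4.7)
The plan is to apply the self-duality of the positive cone together with the defining property of the partial trace, mirroring the argument already used for the Bardet--Collins--Sapra map in Eq.~\eqref{eq:BCS_proof} and the general bipartite/multipartite correspondence of Eqs.~\eqref{eq:JI_wit}--\eqref{eq:JI-multipartite}.

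For the map $f_S$, I would first use self-duality on the output: $f_S(A)\geq 0$ for every $A\geq 0$ is equivalent to requiring $\tr[f_S(A)\, C]\geq 0$ for every pair of positive semidefinite $A$ (on system $1$) and $C$ (on systems $23$). Plugging in the definition of $f_S$ and using the coordinate-free characterisation of the partial trace~\eqref{eq:ptrace_def},
\begin{equation*}
\tr\!\big[f_S(A)\,C\big]=\tr\!\big[\tr_1[\varrho^{T_S}(A\otimes\one\otimes\one)]\,C\big]=\tr\!\big[\varrho^{T_S}(A\otimes C)\big].
\end{equation*}
Thus $f_S$ is positive iff $\tr[\varrho^{T_S}(A\otimes C)]\geq 0$ for all $A,C\geq 0$ on the two halves of the partition $1|23$.

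The second step is to identify this condition with block-positivity of $\varrho^{T_S}$ on $1|23$. One direction is immediate by specialising to rank-one projectors $A=\ketbra{\phi}{\phi}$ and $C=\ketbra{\psi}{\psi}$, which gives $\bra{\phi}\bra{\psi}\varrho^{T_S}\ket{\phi}\ket{\psi}\geq 0$. For the converse, I would use the spectral decomposition of the positive operators $A=\sum_i\ketbra{\phi_i}{\phi_i}$ and $C=\sum_j\ketbra{\psi_j}{\psi_j}$ (with absorbed eigenvalues) and expand $\tr[\varrho^{T_S}(A\otimes C)]=\sum_{ij}\bra{\phi_i\psi_j}\varrho^{T_S}\ket{\phi_i\psi_j}\geq 0$ termwise. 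This establishes the claim for $f_S$.

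The proof for $g_S$ follows the same template, applied one step deeper as in Eq.~\eqref{eq:JI-multipartite}: self-duality gives $g_S(A,B)\geq 0$ for all $A,B\geq 0$ iff $\tr[g_S(A,B)\,C]\geq 0$ for all $A,B,C\geq 0$, and the partial-trace identity reduces this to $\tr[\varrho^{T_S}(A\otimes B\otimes C)]\geq 0$. Spectral decomposition of $A$, $B$, $C$ then shows equivalence with block-positivity of $\varrho^{T_S}$ on the full tripartition $1|2|3$ (equivalently, non-negativity on all fully product vectors). There is no real obstacle here; the only point to be careful about is bookkeeping — on which systems the partial transpose $T_S$, the partial traces, and the test operator $C$ act — but once the indices are aligned the proof is essentially a two-line application of the self-duality of the positive cone and Eq.~\eqref{eq:ptrace_def}.
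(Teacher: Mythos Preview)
Your proposal is correct and follows exactly the paper's argument: self-duality of the positive cone combined with the defining property of the partial trace to rewrite $\tr[f_S(A)\,C]=\tr[\varrho^{T_S}(A\ot C)]$ and $\tr[g_S(A,B)\,C]=\tr[\varrho^{T_S}(A\ot B\ot C)]$. Your explicit spectral-decomposition step bridging ``non-negative on product positive operators'' and ``block-positive on product vectors'' is a detail the paper leaves implicit, but otherwise the two proofs are identical.
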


\begin{proof}
One requires that 
\begin{align}
    \tr[f_S C]&=\tr\big[\tr_{1}(\varrho^{T_S} A\ot \one\ot \one)C\big]\nn\\
    &=\tr[\varrho^{T_S} A\ot C_{23}]\geq 0, \qquad \forall C_{23}\geq 0\,.
\end{align}
This is the case if and only if $\varrho^{T_S}$ is block-positive in the bipartition $1|23$. Similarly, $g_S$ is positive if and only if
\begin{align}
    \tr[g_S C]&=\tr\big[\tr_{12}(\varrho^{T_S} A\ot B\ot \one)C\big]\nn\\
    &=\tr[\varrho^{T_S} A\ot B\ot C]\geq 0, \qquad \forall C\geq 0 \,,
\end{align}
corresponding to an operator  $\varrho^{T_S}$ that is block-positive for the partition $1|2|3$.
\end{proof}

From here, one can easily see that if $f_S$ is a positive map but $\varrho^{T_1}\not\geq 0$, then $\varrho^{T_1}$ is an entanglement witness across the cut $1|23$. If we now consider the map $g_S$ to be positive, then $\varrho^{T_1}$ is an entanglement witness across $1|2|3$.
A operator that is block-positive for the partition $1|2|3$ is also block-positive for $1|23$. 
Thus if $g_S$ is positive
(for some given $\varrho$), then so is $f_S$. 

\section{Multilinear maps and blockpositivity}
\label{sec:MultiMapsBlocPos}
We now treat a wider setting that involves more than three tensor factors. 
Consider the map $\Lambda(\varrho) = \tr_{12}[P\varrho]$ where $\varrho \in \mathcal{M}_{d^4}$ and $P \in \mathcal{M}_{d^4}$.
If the input $\varrho$ has tensor product structure, it is clear from the previous discussion that $\Lambda$ is positive if and only if $P$ is either a positive operator, or an entanglement witness for the partition $1|2|34$.
In contrast, for entangled input $\varrho$ the map $\Lambda$ is positive if and only if $P$ is block-positive across the bipartition $12|34$. 

Thus it becomes clear that properties required of $P$ depend on the type of input considered.
The following generalises the above discussion.
\begin{proposition}\label{prop:general_pos_map}
Let $X_1,\ldots,X_{k-m}\in \mathcal{M}_d$ and $X_1,\ldots,X_{k-m}\geq 0$. A map 
\begin{align}
&\Lambda(X_1,\dots, X_{k-m}) \nn\\
&= \tr_{1\dots(k-m)}\big[P (X_1 \ot \dots \ot X_{k-m} \ot \one^{\ot m })\big]\,,
\end{align} 
is 
positive if and only if $P$ is block-positive across the partition $1|2|\dots|k-m|k-m+1\dots m$. 
\end{proposition}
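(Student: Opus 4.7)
The plan is to mirror the argument of Proposition~\ref{wernermapspositive}, now for an arbitrary number of subsystems, relying on the self-duality of the positive cone and the coordinate-free definition \eqref{eq:ptrace_def} of the partial trace.

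First, I would apply self-duality to the output of $\Lambda$: the map is positive if and only if
\[
    \tr\bigl[\Lambda(X_1,\dots,X_{k-m})\,C\bigr]\geq 0
\]
holds for every $C\geq 0$ acting on the remaining $m$ tensor factors and for all $X_i\geq 0$. Next, invoking \eqref{eq:ptrace_def} exactly as in the derivation of \eqref{eq:JI-multipartite}, this rewrites as
\[
    \tr\bigl[P\,(X_1\otimes\cdots\otimes X_{k-m}\otimes C)\bigr]\geq 0,
\]
where $C$ now acts on the composite party consisting of the last $m$ factors. Thus positivity of $\Lambda$ is equivalent to $P$ having non-negative expectation on every positive tensor product of this form.

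The remaining step is to identify this condition with block-positivity across the partition $1|2|\cdots|(k-m)|(k-m+1)\cdots k$. I would apply the spectral theorem factor by factor: each $X_i\geq 0$ and $C\geq 0$ is a non-negative combination of rank-one projectors, so the tensor-product condition is equivalent to
\[
    \bra{\phi_1}\cdots\bra{\phi_{k-m}}\bra{\psi}\,P\,\ket{\phi_1}\cdots\ket{\phi_{k-m}}\ket{\psi}\geq 0
\]
for all unit vectors $\ket{\phi_i}\in\C^d$ and every $\ket{\psi}\in(\C^d)^{\otimes m}$, which is precisely the block-positivity condition from Section~\ref{sec:basicconcepts}.

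I do not anticipate a genuine obstacle here: the argument is a structurally identical extension of the tripartite calculation in Proposition~\ref{wernermapspositive}, and the main bookkeeping consists only in tracking which tensor factors are grouped together as the auxiliary party tested against $C$, versus those that are fed a positive input $X_i$.
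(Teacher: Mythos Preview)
Your argument is correct and follows essentially the same route as the paper's proof: both apply self-duality of the positive cone together with the defining property~\eqref{eq:ptrace_def} of the partial trace to rewrite $\tr[\Lambda(X_1,\dots,X_{k-m})\,C]$ as $\tr[P\,(X_1\ot\cdots\ot X_{k-m}\ot C)]$ and then read off block-positivity. Your explicit reduction to rank-one projectors via the spectral theorem is slightly more detailed than the paper, which simply identifies the resulting condition with block-positivity without further justification.
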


\begin{proof}
The proof follows from the self-duality of the positive cone.  
Write $\Lambda(X_1,\dots,X_{k-m})$ as
\begin{align}\label{eq:multilinearmap2}
    \Lambda(X_1,\dots,X_{k-m})
    =
    \tr_{1\dots (k-m)}
    \big[P(X_1\ot\dots\ot X_{k-m}\nonumber\\\ot\one\ot\dots\ot\one)
    \big]\,.
\end{align}
From 
\begin{align}\label{eq:JI-multipartite2}
    &\phantom{=} \tr\Big[
    \Lambda(X_1,\dots,X_{k-m})C \Big] \nonumber \\
    &=
    \tr\Big[
    \tr_{1\dots (k-m)}
    \big[
    P(X_1\ot\dots\ot X_{k-m}
    \ot\one\ot\dots\ot\one) 
    \big]
    \Big]\nonumber\\
    &=\tr\big[
    P(X_1\ot\dots\ot X_{k-m}\ot C)
    \big]\,, \quad \  \forall C\geq 0,
\end{align}
one sees that 
$\Lambda(X_1,\dots,X_{k-m})$ is a multilinear positive map  
if and only if 
$P$ is block-positive.
\end{proof}

How can such maps be evaluated? We state an example before giving general recipes.
\begin{example}[$3\to 2$ tensor factors map]
Consider the following map, where a generic $X_i$ can be written as $X_i=\sum_{\alpha_i,\beta_i=1}^d\chi_{\alpha_i,\beta_i}^{(i)}|\alpha_i\rangle\langle \beta_i|$ 
\begin{align}
    &\tr_{123}[(12345)^{T_2} X_1\ot X_2\ot X_3\ot \one\ot \one]\nn\\
    &=\tr_{123}[(|lmmjk\rangle\langle iijkl|)X_1\ot X_2\ot X_3\ot \one\ot \one]\nn\\
    &=\delta_{\alpha_1\alpha_2}\delta_{\beta_2\beta_3}\chi_{\alpha_1,\beta_1}^{(1)}\chi_{\alpha_2,\beta_2}^{(2)}\chi_{\alpha_3,\beta_3}^{(3)}|\alpha_3\rangle\langle k|\ot|k\rangle\langle \beta_1|\nn\\
    &=((X_3X_2^TX_1\ot\one)^R)^{T_2}\,,
\end{align}
where the last equality follows from the definition of the reshuffling operator and the partial transpose.
\end{example}

These tensor factors can also be thought of as quantum registers. 
In turns out that maps from $k$ to $1$ and $1$ to $k-1$ tensor factors have a particularly nice structure and can be evaluated a quick manner.

\subsection{From $k$ to $1$ registers}
Let us consider maps from $k$ to $1$ tensor factors. 
To see how this works, let $X_i=\sum_{\alpha_i,\beta_i=1}^d\chi_{\alpha_i,\beta_i}^{(i)}|\alpha_i\rangle\langle \beta_i|$ and consider 
\begin{align} \label{eq:4to1}
    \tr_{1234}&\big[(54321)^{T_5} X_1 \ot X_2 \ot X_3\ot X_4\ot X_5\big] \nn\\
    =&\tr_{1234}\big[|jklmm\rangle\langle ijkli| X_1 \ot X_2 \ot X_3\ot X_4\ot X_5\big]\nn\\
    =&\delta_{\alpha_1\alpha_5}\delta_{\beta_1\alpha_2}\delta_{\beta_2\alpha_3}\delta_{\beta_3\alpha_4}\chi_{\alpha_1,\beta_1}^{(1)}\chi_{\alpha_2,\beta_2}^{(2)}\nn\\
    &\cdot\chi_{\alpha_3,\beta_3}^{(3)}\chi_{\alpha_4,\beta_4}^{(4)}\chi_{\alpha_5,\beta_5}^{(5)}|\beta_4\rangle\langle \beta_5|\nn\\
    =&(X_1X_2X_3X_4)^T X_5\,,
\end{align}
where we have used the graphical notation explained in Appendix~\ref{app:graphicaltrick}.
This generalises as follows.
\begin{widetext}
\begin{prop}\label{prop:k-1to1}
Let $X_1,\dots,X_k\in \mathcal{M}_d$ and consider the cycle $(k\dots 1)=(1\dots k)^{-1}$. Then
\begin{align}
    \tr_{1\dots k\backslash k}[(k\dots 1)^{T_j} X_1\ot\cdots\ot X_k]=
        \begin{cases}
            X_1\cdots X_j^T\cdots X_{k-1} X_k &\text{for } j\neq k. \\
            (X_{1}\cdots X_{k-1})^T X_k &\text{for } j=k.
        \end{cases}\\
\allowdisplaybreaks
\tr_{1\dots k\backslash 1}[(1\dots k)^{T_j} X_1\ot\cdots\ot X_k]=
        \begin{cases}
             X_kX_{k-1}\cdots X_j^T \cdots X_1  &\text{for } j\neq 1. \\
              X_k^TX_{k-1}\cdots X_2 X_1  &\text{for } j=1.
        \end{cases}
\end{align}
\end{prop}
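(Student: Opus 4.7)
The plan is to reduce both identities to an unsigned base case and then propagate the partial transpose using two elementary relations. The base case is $\tr_{1\dots k-1}[(k\dots 1)(X_1\otimes\cdots\otimes X_k)] = X_1X_2\cdots X_k$, which I would establish by direct index computation: writing $(k\dots 1)=\sum_{i_1,\dots,i_k}\ket{i_2 i_3\cdots i_k i_1}\bra{i_1 i_2\cdots i_k}$, multiplying by $X_1\otimes\cdots\otimes X_k$, and evaluating the partial trace over the first $k-1$ factors, which chain-contracts consecutive pairs of matrix indices. This is the natural generalisation of Eq.~\eqref{productmatrices}, and the graphical rules of Appendix~\ref{app:graphicaltrick} make the contraction nearly mechanical. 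The analogous base case for the second identity, $\tr_{2\dots k}[(1\dots k)(X_1\otimes\cdots\otimes X_k)] = X_k X_{k-1}\cdots X_1$, differs only in that the kept register is now factor $1$ and the cycle is the inverse of $(k\dots 1)$, which reverses the order of the product.

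For the regular case ($1\le j\le k-1$ in the first identity and $2\le j\le k$ in the second), the partial transpose sits on a traced-out factor. Here I would use the elementary migration identity $\tr_j[M^{T_j}(\cdots\otimes X_j\otimes\cdots)] = \tr_j[M(\cdots\otimes X_j^T\otimes\cdots)]$, obtained by pairing both sides with an arbitrary test operator on the non-traced factors and applying $\tr[A^{T_j}B]=\tr[AB^{T_j}]$ under the full trace. This moves the transpose from $(k\dots 1)^{T_j}$ onto $X_j$, and the base case then yields $X_1\cdots X_j^T\cdots X_k$. The same step applied to $(1\dots k)^{T_j}$ produces $X_k\cdots X_j^T\cdots X_1$.

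The boundary case ($j=k$ for the first identity, $j=1$ for the second) is the only genuinely new step: the partial transpose now lives on the register that is kept, so the migration trick does not apply directly. I would use the output-transpose relation $\tr_{1\dots k-1}[M]^T = \tr_{1\dots k-1}[M^{T_k}]$ (the transpose of the partial-trace output equals the partial trace with the kept leg partially transposed inside), then use $(AB)^{T_k}=B^{T_k}A^{T_k}$ and $((k\dots 1)^{T_k})^{T_k}=(k\dots 1)$ to rewrite the integrand of $R^T$ as $(X_1\otimes\cdots\otimes X_{k-1}\otimes X_k^T)(k\dots 1)$. The conjugation identity $\sigma(Y_1\otimes\cdots\otimes Y_k)\sigma^{-1} = Y_{\sigma^{-1}(1)}\otimes\cdots\otimes Y_{\sigma^{-1}(k)}$ with $\sigma = (k\dots 1)$ then pulls the cycle to the left at the cost of a cyclic shift of the tensor factors, reducing the computation to the base case applied to $(k\dots 1)(X_k^T\otimes X_1\otimes\cdots\otimes X_{k-1})$. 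This equals $X_k^T X_1\cdots X_{k-1}$, whose transpose is $(X_1\cdots X_{k-1})^T X_k$ as required. The same pattern produces $X_k^T X_{k-1}\cdots X_1$ for $j=1$ in the second identity.

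The main obstacle I anticipate is bookkeeping: tracking whether one is applying $\sigma$ or $\sigma^{-1}$ in the conjugation identity and which register ends up where after the cyclic shift. A useful consistency check is the $k=5$, $j=5$ computation in Eq.~\eqref{eq:4to1}, which produces $(X_1X_2X_3X_4)^T X_5$ and confirms the orderings and transposes before committing to the general argument.
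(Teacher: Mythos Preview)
Your approach differs from the paper's, which simply carries out a direct index calculation: it writes $(k\dots 1)^{T_j}$ explicitly as $\sum\ket{i_1\cdots i_{j-1}i_{j-1}i_{j+1}\cdots i_k}\bra{i_k i_1\cdots i_{j-2}i_j i_j\cdots i_{k-1}}$, multiplies by $X_1\otimes\cdots\otimes X_k$, and reads off the Kronecker deltas produced by the partial trace in the case $j\neq k$, remarking that the other cases are similar. Your reduction to a base case plus two transfer lemmas (migration of $T_j$ onto $X_j$ for traced legs, and an output-transpose relation for the kept leg) is more structural and arguably cleaner, and it separates the ``regular'' and ``boundary'' cases in a way the paper's brute-force computation does not.

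There is, however, one slip in the boundary step: the identity $(AB)^{T_k}=B^{T_k}A^{T_k}$ is \emph{false} for partial transposes. The full transpose is an anti-homomorphism, but the partial transpose is not; in general $(AB)^{T_k}$ equals neither $A^{T_k}B^{T_k}$ nor $B^{T_k}A^{T_k}$. What does hold are the local rules $(M(X'\otimes\one))^{T_k}=M^{T_k}(X'\otimes\one)$ and $(M(\one\otimes X_k))^{T_k}=(\one\otimes X_k^T)M^{T_k}$. Applying these to $M=(k\dots 1)^{T_k}$ with $X=X'\otimes X_k$ gives
\[
\big((k\dots 1)^{T_k}(X_1\otimes\cdots\otimes X_k)\big)^{T_k}=(\one^{\otimes(k-1)}\otimes X_k^T)\,(k\dots 1)\,(X_1\otimes\cdots\otimes X_{k-1}\otimes\one)\,,
\]
whence $R^T=X_k^T\,\tr_{1\dots k-1}\big[(k\dots 1)(X_1\otimes\cdots\otimes X_{k-1}\otimes\one)\big]=X_k^T X_1\cdots X_{k-1}$ directly from the base case --- so the conjugation step you flagged as the main bookkeeping concern is in fact unnecessary. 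Your intermediate expression $(X_1\otimes\cdots\otimes X_{k-1}\otimes X_k^T)(k\dots 1)$ happens to agree with the correct one \emph{after} applying $\tr_{1\dots k-1}$ (by cyclicity over the traced factors), which is why your final answer comes out right; but the operator identity you invoked to get there does not hold, and should be replaced by the two local rules above.
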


The proof can be found in Appendix \ref{app:proofs}.
Consider now permutation operators where multiple indices are partially transposed.
We need the following.
\begin{definition}
Let $\theta_j: \mathcal{M}_d^r \rightarrow \mathcal{M}_d$ act on a tuple of  matrices $(X_{\alpha_1}, \dots, X_{\alpha_r})$ as
\begin{equation}
    \theta_j(X_{\alpha_1}, \dots, X_{\alpha_r}) = 
    \begin{cases}
    X_{\alpha_1} \dots X_j \dots X_{\alpha_r} & \text{if } j \not \in {\alpha_1, \dots, \alpha_r}. \\
    X_{\alpha_1} \dots X_j^T \dots X_{\alpha_r} & \text{if } j  \in {\alpha_1, \dots, \alpha_r}.
    \end{cases}
\end{equation}
Naturally, $\theta_i \theta_j = \theta_j \theta_i$ for all $i,j$.
Similarly, let $\bar\theta_j$ act on the matrix product $X_{\alpha_1} \dots X_{\alpha_r}$ as
\begin{equation}
   \bar \theta_j(X_{\alpha_1}, \dots, X_{\alpha_r}) = 
    \begin{cases}
        X_{\alpha_1} \dots X_j \dots X_{\alpha_r} & \text{if } j  \in {\alpha_1, \dots, \alpha_r}\,. \\
        X_{\alpha_1} \dots X_j^T \dots X_{\alpha_r} & \text{if } j  \not\in {\alpha_1, \dots, \alpha_r}\,.
    \end{cases}
\end{equation}
\end{definition}

With these definitions, we can now consider permutations where multiple indices are partially transposed.
Let $\theta_S = \prod_{j \in S} \theta_j$ and likewise for $\overbar{\theta}_S$.
Recall that $T_S$ denotes the partial transpose on a subset $S$.

\begin{prop}
Let $X_1,\dots,X_k\in \mathcal{M}_d$ and consider the cycle $(k\dots 1)=(1\dots k)^{-1}$. Then
\begin{align}
    \tr_{1\dots k\backslash k}[(k\dots 1)^{T_S} X_1\ot\cdots\ot X_k]=
        \begin{cases}
            \theta_S(X_{1},X_2,\cdots, X_{k-1}, X_k)  &\text{for } k \not\in S\,. \\\bar\theta_S (X_{k-1},X_{k-2},\cdots ,X_{1}, X_k) &\text{for } k \in S\,.
        \end{cases} 
\end{align}
\end{prop}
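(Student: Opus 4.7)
My plan is to proceed by direct index computation, generalising the derivation of Eq.~\eqref{eq:4to1} and of Prop.~\ref{prop:k-1to1}. The key observation is that partial transpositions on distinct tensor factors commute and act independently, so that applying $T_S$ to $(k\dots 1)$ simply flips bra and ket precisely on the factors indexed by $S$.

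First I would expand $\sigma=(k\dots 1)$ in the resolution-of-unity form
\[
\sigma = \sum_{j_1,\dots,j_k} |j_1\rangle\langle j_k| \ot |j_2\rangle\langle j_1| \ot \cdots \ot |j_k\rangle\langle j_{k-1}|,
\]
using $\sigma(r)=r-1$ for $r\geq 2$ and $\sigma(1)=k$. After applying $T_S$, multiplying by $X_1\ot\cdots\ot X_k$, and tracing over each of the first $k-1$ registers, the $r$-th factor contributes a scalar entry of $X_r$ (if $r\notin S$) or of $X_r^T$ (if $r\in S$). Writing $Y_r:=X_r$ for $r\notin S$ and $Y_r:=X_r^T$ for $r\in S$, these entries all take the form $(Y_r)_{j_{\sigma(r)},j_r}$ for $r=1,\dots,k-1$. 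Because of the cyclic index structure of $\sigma$, they telescope under matrix multiplication into the single entry $(Y_1 Y_2\cdots Y_{k-1})_{j_k,j_{k-1}}$.

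The two cases in the statement then arise from the different fates of the untraced $k$-th factor. If $k\notin S$, the surviving ket-bra is $|j_k\rangle\langle j_{k-1}|\,X_k$, and the residual double sum over $j_{k-1},j_k$ reassembles into $Y_1\cdots Y_{k-1} X_k = \theta_S(X_1,\dots,X_{k-1},X_k)$. If $k\in S$, the surviving ket-bra is instead $|j_{k-1}\rangle\langle j_k|\,X_k$, which reverses the roles of the two free indices and yields $(Y_1\cdots Y_{k-1})^T X_k = Y_{k-1}^T\cdots Y_1^T X_k$; since transposing $Y_r$ flips whether the factor carries a $T$, this matches the definition of $\bar\theta_S(X_{k-1},\dots,X_1,X_k)$.

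The main obstacle I foresee is purely notational: keeping the index order consistent through the telescoping, and carefully checking in the $k\in S$ case that the reversed product lands on $\bar\theta_S$ rather than $\theta_S$ — this role flip is exactly where the two branches of the statement diverge. As sanity checks, $|S|=1$ recovers Prop.~\ref{prop:k-1to1} and $S=\emptyset$ recovers the matrix product from Eq.~\eqref{productmatrices}.
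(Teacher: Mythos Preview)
Your proposal is correct and follows essentially the same approach the paper indicates: the paper states only that the proof ``follows from the previous one applying the partial transpose on all the variables of the subset, instead of only partially transposing one variable,'' and your index computation is precisely this generalisation of the Appendix~\ref{app:proofs} proof of Prop.~\ref{prop:k-1to1}. If anything, you supply more detail than the paper, including the explicit telescoping and the verification that the reversed product in the $k\in S$ case matches $\bar\theta_S$.
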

The proof of this Proposition follows from the previous one applying the partial transpose on all the variables of the subset, instead of only partially transposing one variable.
This can be seen as a generalisation of Proposition~2 in Ref.~\cite{Huber2021} with partial transpose on subset $S$.
\end{widetext}

\subsection{From $1$ to $k-1$ registers}
Here we consider the case from $1$ to $k-1$ registers of the form $\Lambda(A) = \tr_{1}[(1\dots k)^{T_k} A\ot \one\ot\dots\ot \one]$.
This type of map can also be evaluated nicely.
For this we define the reshuffling of the sites $k$ and $l$ in a manner slightly different to the one defined for two sites only 
(c.f. Section~\ref{sec:basicconcepts}).

\begin{definition}[Reshuffling $R_{k,l}$ of sites $k,l$]\label{def_reshuffling} 
Let $X=\ket{i_1 \dots \underline{i_k} \dots i_n} \bra{j_1 \dots \underline{j_l} \dots j_n}\in \mathcal{M}_d^{\otimes n}$. 
Then 
\begin{equation}
    X^{R_{k,l}}=\ket{i_1 \dots \underline{j_l} \dots i_n} \bra{j_1 \dots \underline{i_k} \dots j_n}\in \mathcal{M}_d^{\otimes n}\,.
\end{equation}
\end{definition}
In words, the $k$'th ``ket" exchanges site with the $l$'th ``bra" (see Fig.~\ref{fig:reshuffling_kl}).
We illustrate its use with a small example that can be graphically seen in Fig.~\ref{fig:example11}.

\begin{example}\label{example11}
Let $A=|\alpha\rangle\langle\beta|$, then
\begin{align}
    &\tr_1[(1234)^{T_4}A\ot \one\ot\one\ot\one]\nn\\
    &=\tr_1[(|lijl\rangle\langle ijkk|)A\ot\one\ot\one\ot\one]\nn\\
    &=|\alpha\rangle\langle k|\ot|k\rangle\langle l|\ot|\beta\rangle\langle l| \nn\\
    &=\left[(A\ot \one\ot \one)^{R_{3,2}}\right]^{R_{3,1}}\,.
\end{align}
\end{example}

\begin{figure*}[tbp]
\begin{center}
   \includegraphics[width=0.8\textwidth]{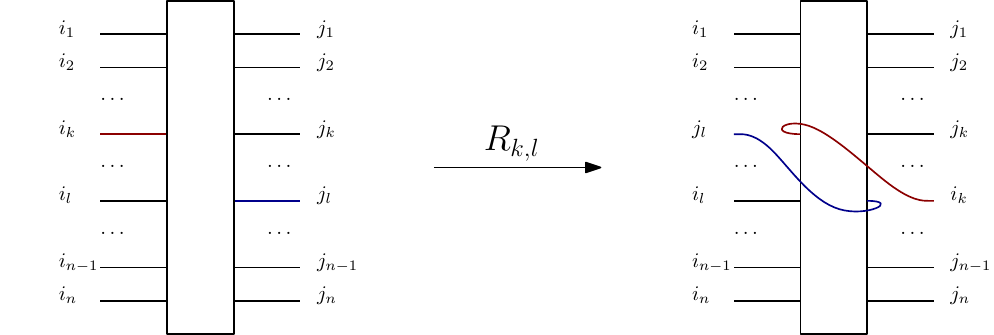}
\caption{Graphical representation of the reshuffling of sites $k,l$. The action of the reshuffling operation interchanges the $k$'th ``ket" leg with the $l$'th ``bra" leg.} 
\label{fig:reshuffling_kl}
\end{center}
\end{figure*}

\begin{figure}[tbp]
\begin{center}
   \includegraphics[width=\columnwidth]{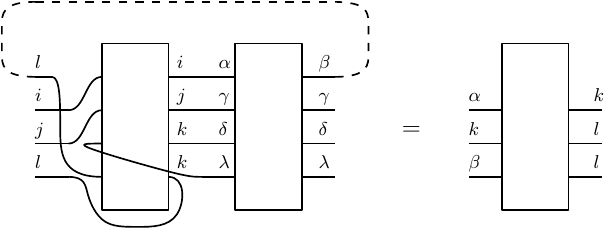}
\caption{Graphical representation of Example~\ref{example11}. The dashed line represents the partial trace.} 
\label{fig:example11}
\end{center}
\end{figure}

This generalizes in the following way.

\begin{widetext}
\begin{prop}\label{prop:1tok-1}
Let $A\in\mathcal{M}_d$ and consider the cycle $(1\dots k)$. Then
\begin{align}
    \tr_{1}[(1\dots k)^{T_k} A\ot \one\ot\dots\ot \one]=\left[\Bigg[\Big[[A\ot\one\ot \one\ot \cdots \ot \one]^{R_{k-1,k-2}}\Big]^{R_{k-1,k-3}}\Bigg]^{\dots}\right]^{R_{k-1,1}}\,,
\end{align}
where $R_{k,l}$ stands for reshuffling of sites $k,l$.
\end{prop}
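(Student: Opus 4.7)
The plan is to prove the identity by direct expansion in the standard basis, since every operation involved --- the partial transpose on a single slot, the reshufflings $R_{k-1,j}$, and the partial trace against identities --- acts by simple index relabelling. The graphical calculus of Appendix~\ref{app:graphicaltrick} could deliver the same argument pictorially, and Example~\ref{example11} already displays the $k=4$ case; the task is to lift that computation to general $k$.

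First I would reduce the left-hand side to closed form. Writing the cycle as $(1\dots k)=\sum |i_k,i_1,\dots,i_{k-1}\rangle\langle i_1,\dots,i_k|$ and partially transposing slot $k$ exchanges the ket and bra indices at that slot, giving $(1\dots k)^{T_k}=\sum|i_k,i_1,\dots,i_{k-2},i_k\rangle\langle i_1,\dots,i_{k-1},i_{k-1}|$. Multiplying by $A\otimes \one^{\ot (k-1)}$ collapses the last $k-1$ identity factors against the bra of $(1\dots k)^{T_k}$ in the obvious way, and taking $\tr_1$ then forces the free bra index of $A$ to equal the ket label that used to sit at site $k$. The result is a single sum indexed by $k$ variables with coefficient $A_{i_1 i_k}$ and an explicit pattern of repeated indices in ket and bra.

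For the right-hand side I would proceed by induction on the number of reshufflings applied, starting from $A\otimes\one^{\ot(k-2)}$ expanded in the basis with identity labels $n_2,\dots,n_{k-1}$. The crucial observation is that each $R_{k-1,j}$ only touches the $(k-1)$th ket slot and the $j$th bra slot, and since $j$ decreases strictly from $k-2$ down to $1$, the $j$th bra slot still carries its original label at the moment it is reached. At stage $j\geq 2$ the label currently travelling in the $(k-1)$th ket is deposited into the $j$th bra position and replaced by $n_j$; at the final stage $j=1$ the label $\beta$ migrates into the $(k-1)$th ket. Under the relabelling $\alpha\leftrightarrow i_1$, $\beta\leftrightarrow i_k$, $n_j\leftrightarrow i_j$ for $j=2,\dots,k-1$, the resulting expression coincides term-by-term with the closed form obtained for the left-hand side, proving the identity.

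The only non-trivial content --- and the main obstacle to keep track of --- is verifying that no bra slot is revisited by a later reshuffling, which is exactly why the $j$ are applied in strictly decreasing order; once that observation is in hand, the induction becomes essentially mechanical, and the graphical version of the argument via the wire diagrams in Appendix~\ref{app:graphicaltrick} provides a useful sanity check.
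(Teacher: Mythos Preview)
Your proposal is correct and follows essentially the same route as the paper's proof: both compute the left-hand side by expanding the partially transposed cycle in the standard basis and contracting against $A\otimes\one^{\ot(k-1)}$, then evaluate the right-hand side by applying the reshufflings $R_{k-1,j}$ one at a time, and finally match the two by a relabelling of summation indices. The only cosmetic differences are that the paper works with a rank-one $A=|\alpha\rangle\langle\beta|$ and extends by linearity, while you carry a general matrix element $A_{i_1 i_k}$, and that you phrase the RHS computation as an induction with the explicit observation that the decreasing order of $j$ prevents any bra slot from being touched twice --- a point the paper leaves implicit.
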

The proof of this Proposition can be found in Appendix \ref{app:proofs}.
\end{widetext}

We can also write this result in terms of a single permutation. To do so, consider the following cyclic permutation with $k'=k-1$
\begin{align}
    \pi = ((2k'-1)(2k'-2)\dots(k'+1) k')\,,
    \label{permutationreshuffling}
\end{align}
and consider the mapping $\tau$ that switches the ``ket"-side of an operator into a ``bra", 
defined by
\begin{align}
    \tau(\ket{i_1 \dots i_n}\bra{j_1 \dots j_n})
    &= \ket{i_1 \dots i_n}\ket{j_1 \dots j_n} \,,\label{thetakettobra}
    \nn\\
    \tau^{-1}(\ket{i_1 \dots i_n}\ket{j_1 \dots j_n})
    &= \ket{i_1 \dots i_n}\bra{j_1 \dots j_n}\,.
\end{align}
It is clear that $\tau^{-1}\circ \tau = \id$. Then, a permutation $\pi \in S_{2n}$ acts on $\ket{i_1 \dots i_n}\bra{j_1 \dots j_n}$ by 
\begin{equation}
    \pi (\ket{i_1 \dots i_n}\bra{j_1 \dots j_n}) := 
    \tau^{-1} \pi \tau(\ket{i_1 \dots i_n}\bra{j_1 \dots j_n})\,.
\end{equation}
With these definitions, the concatenation of reshufflings can be written as a single permutation.
\begin{prop}\label{prop:1tok-1permutation}
Let $A\in\mathcal{M}_d$ and consider the cycle $(1\dots k)$. Then
\begin{align}
    &\tr_{1}[(1\dots k)^{T_k} A\ot \one\ot\dots\ot \one] \nn\\ &=[\tau^{-1}\pi\tau(A \ot\one\ot\dots\ot \one)]\,,
\end{align}
where $\pi$ and $\tau$ are defined in Eq.~\eqref{permutationreshuffling} and Eq.~\eqref{thetakettobra}
respectively.
\end{prop}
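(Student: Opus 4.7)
The plan is to build on Proposition~\ref{prop:1tok-1}, which already writes $\tr_{1}[(1\dots k)^{T_k} A\ot\one\ot\cdots\ot\one]$ as the ordered concatenation of reshufflings $R_{k-1,k-2},R_{k-1,k-3},\ldots,R_{k-1,1}$ applied to $A\ot\one\ot\cdots\ot\one$. The goal is to show that, after transporting to the $2k'$-ket picture via $\tau$ (with $k'=k-1$), this concatenation collapses into the single cyclic permutation $\pi=((2k'-1)(2k'-2)\cdots(k'+1)k')$.

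First I would translate a single reshuffling into a transposition. By Definition~\ref{def_reshuffling}, $R_{k',j}$ exchanges the $k'$-th ket leg with the $j$-th bra leg. Once we apply $\tau$, the bra indices are folded into positions $k'+1,\ldots,2k'$ of the resulting $2k'$-ket, so that the $j$-th bra sits at position $k'+j$. Hence $R_{k',j}$ corresponds to the transposition $(k',\,k'+j)\in S_{2k'}$ acting on $\tau(A\ot\one\ot\cdots\ot\one)$; this is a one-line calculation directly from the definitions.

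Next I would assemble these transpositions in the correct order. Since $R_{k',k'-1}$ is the innermost bracket (applied first) and $R_{k',1}$ is the outermost (applied last), their combined action is the product of transpositions, composed right-to-left,
\[
(k',k'+1)\,(k',k'+2)\cdots(k',2k'-1)\,.
\]
A direct index chase shows this product equals $\pi$: the rightmost transposition sends $2k'-1\mapsto k'$, and the next one routes $k'$ to $2k'-2$; more generally, each $j\in\{k'+2,\ldots,2k'-1\}$ gets routed through $k'$ and out to $j-1$, while $k'+1\mapsto k'$ and $k'\mapsto 2k'-1$. These are precisely the images under $\pi$. Applying $\tau^{-1}$ then converts the $2k'$-ket back into an operator on $k'$ tensor factors, yielding the identity in the proposition.

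The main obstacle is not computing anything new but keeping conventions consistent: one has to track whether \emph{applied first} corresponds to the rightmost or leftmost factor in the permutation product, and whether the $j$-th bra of the input is sent to position $k'+j$ or to some reversed index under $\tau$. Once these conventions are pinned down, the statement reduces to this short cycle-notation verification.
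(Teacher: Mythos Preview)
Your argument is correct and takes a genuinely different route from the paper. The paper proves the proposition by a direct basis computation: it writes $A=\ket{\alpha}\bra{\beta}$, expands both the left-hand side (via the explicit form of $(1\dots k)^{T_k}$ and the partial trace) and the right-hand side (by applying $\tau$, then $\pi$, then $\tau^{-1}$) in coordinates, and matches the resulting index strings. Your approach instead leverages Proposition~\ref{prop:1tok-1}, converts each reshuffling $R_{k',j}$ into the transposition $(k',k'+j)$ in the $2k'$-ket picture, and then verifies the cycle identity $(k',k'+1)(k',k'+2)\cdots(k',2k'-1)=\pi$. This is more structural: it explains \emph{why} $\pi$ has precisely the form in Eq.~\eqref{permutationreshuffling}, namely as the standard transposition decomposition of a cycle anchored at the slot $k'$. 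The paper's computation is shorter and self-contained (it does not invoke Proposition~\ref{prop:1tok-1}), but your reduction makes the relationship between the two propositions transparent and would generalise more readily if one wanted variants with a different anchoring site or a different cycle. Your caveat about conventions is apt; with the paper's convention $\sigma\ket{v_1\cdots v_n}=\ket{v_{\sigma^{-1}(1)}\cdots v_{\sigma^{-1}(n)}}$ and $\tau$ placing the $j$-th bra at position $k'+j$, the bookkeeping you describe goes through exactly as stated.
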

The proof of this follows the same idea as the previous one and can be found in Appendix~\ref{app:proofs}.

\section{Maps and matrix functions from irreps of $U^{\otimes (n-k)}\otimes \bar{U}^{\otimes k}$}\label{sec:irrepssection}
In this section, we construct completely positive multilinear maps that are invariant under the action of $U^{\ot (n-k)} \ot \bar{U}^{\ot k}$.
For this, we exploit the irreps of the algebra of partially transposed permutation operators $\mathcal{WBA}(n,k,d)$ over last $k$ subsystems. This is equivalent to analysing the irreps of the commutant of $U^{\otimes (n-k)}\otimes \bar{U}^{\otimes k}$ operations, where the bar denotes complex conjugation. Here, as an additional parameter, we fix a local dimension $d$, since we work on the representation space $(\mathbb{C}^d)^{\otimes n}$. However, to consider abstract irreps one does not have to use this parameter.  For the reader's convenience we refer to Appendix~\ref{appA:rep_th} for a short overview of the topic. It also contains the necessary references for further reading.

The irreducible representations of the symmetric group $S_n$ are labelled by so called Young diagram $\alpha \vdash n$, which are represented by a collection of $n$ boxes, arranged in left-justified rows, with rows lengths in increasing order.
Let $\alpha \vdash (n-2k)$, $\mu \vdash (n-k)$ be two irreducible representations of symmetric groups $S_{n-2k}$ and $S_{n-k}$ respectively with the restriction that $\mu$ can be obtained from $\alpha$ by adding $k$ boxes. With these irreducible representations we can associate projectors $P_{\alpha}$, $P_{\mu}$ on the respective isotypic components. 
Due to the results from Ref.~\cite{StuNJP,Mozrzymas2021optimalmultiport} the irreducible projectors of the algebra $\mathcal{WBA}(n,k,d)$, denoted as $F_{\mu}(\alpha)$, are related to irreducible projectors $P_{\alpha}$, $P_{\mu}$ in the following way\footnote{The notation slightly differs to~\cite{StuNJP,Mozrzymas2021optimalmultiport}, as we wanted to make clear that the $P_\alpha$ and $\sigma^{(k)}$ have distinct supports. Putting the tensor product achieves that.}:
\begin{equation}
\label{eq:FsingleT}
F_{\mu}(\alpha)=\frac{1}{\gamma_{\mu}(\alpha)}P_{\mu}\sum_{\eta \in \mathcal{S}_{n,k}}\eta^{-1}P_{\alpha}\otimes \sigma^{(k)} \eta\,,
\end{equation}
with the normalisation factor:
\begin{equation}
 \gamma_{\mu}(\alpha)=k!\binom{n-k}{k}\frac{m_{\mu}}{m_{\alpha}}\frac{d_{\alpha}}{d_{\mu}}\,.   
\end{equation}
The sum in~\eqref{eq:FsingleT} runs over the coset $\mathcal{S}_{n,k}= \frac{S(n-k)}{S(n-2k)}$ and  the $\mathcal{WBA}$ operator $\sigma^{(k)}$ is given by~\eqref{sigmagen1}. In the particular case $k=1$ we have $\sigma^{(1)}=(n-1,n)^{T_n}$.

The numbers $d_{\mu}$ and $d_{\alpha}$ are the dimensions, and $m_{\mu}$ and $m_{\alpha}$ are the multiplicities of the irreps in the Schur-Weyl duality labelled by $\mu$ and $\alpha$, respectively. 

Having a recipe to construct projectors $F_{\mu}(\alpha)$ one can construct new examples of multilinear maps by adapting Proposition~\ref{prop:general_pos_map}:
\begin{align}\label{eq:multilinearmapF}
    &\Lambda_{\alpha,\mu}(X_1,\dots,X_n) \\
    &=
    \tr_{1\dots n}
    \big[F_{\mu}(\alpha)(X_1\ot\dots\ot X_n\ot\one\ot\dots\ot\one)
    \big]\,. \nonumber
\end{align}

We will now show particular examples of this general case, by looking at the irreps of $U^{\ot 3}\ot \bar{U}$ and $U^{\ot 3}\ot \bar{U}^{\ot 2}$.

\begin{example}[Case $n=4,k=1,d=2$]
\ytableausetup{smalltableaux}
The possible choices for $\mu$ are $(3), (2,1)$ and $(1,1,1)$,
while $\a$ can be $(2)$ and $(1,1)$.
The allowed combinations of $\mu$ and $\alpha$ are those that differ by a single box. 
Additionally, 
since we work with $d=2$ we know that antisymmetric space, labelled by $(1,1,1)$, is not represented.
This gives three projectors corresponding to the partitions
\begin{align}
(\alpha_1,\mu_1) &= \Big(\ydiagram{2}, \ydiagram{3}\Big)\,,
\nn \allowdisplaybreaks\\
(\alpha_1,\mu_2) &= \Big(\ydiagram{2}, \ydiagram{2,1}\Big)\,, \nonumber\\
%
%(\alpha_2,\mu_3) &= \Big(\ydiagram{1,1}, \ydiagram{1,1,1}\Big)
(\alpha_2,\mu_3) &= \Big(\ydiagram{1,1}, \ydiagram{2,1}\Big)\,.
\end{align}

For this example, we consider the projector $F_{\mu_2}(\alpha_1)$ generated from the symmetric partition $(2)$. 
In this case the normalisation constant is $\gamma_{\mu_2}(\alpha_1)=~1$.
The Young projectors $P_{\alpha_1}$ and $P_{\mu_2}$ are given by
\begin{align}
    P_{\alpha_1}&=\frac{1}{2}\left[\id+(12)\right], \nn\\
    P_{\mu_2}&=\frac{1}{3}\left[2\id-(123)-(132)\right]\,.
\end{align}
With this we arrive at
\begin{widetext}
\begin{align}
    F_{\mu_2}(\alpha_1)&=\frac{1}{3}\Big(2\id-(123)-(132)\Big)\Big(P_{\alpha_1} \sigma^{(1)}+(23)P_{\alpha_1} \sigma^{(1)}(23)+(13)P_{\alpha_1} \sigma^{(1)}(13)\Big)\nonumber\\
    &=\frac{1}{6}\Big\{-[(123)(14)^{T_4}+(123)(24)^{T_4}+(123)(34)^{T_4}+(132)(14)^{T_4}+(132)(24)^{T_4}+(132)(34)^{T_4}]\nonumber\\&+2[(14)^{T_4}+(24)^{T_4}+(34)^{T_4}]\Big\}\,,
\end{align}
\end{widetext}
where $\sigma^{(1)}=(34)^{T_4}$, and the second equality is obtained using the graphical notation in Appendix~\ref{app:graphicaltrick}. 
With Prop.~\ref{prop:general_pos_map},
the projector $ F_{\mu_2}(\alpha_1)$ 
gives the following positive map from 2 to 2 tensor factors,
\begin{align}\label{eq:2to2}
& \Lambda_{\alpha_1,\mu_2}(A,B) = 
  \tr_{12}[F_{\mu_2}(\alpha_1) A\ot B\ot \one\ot \one]\nn\\
    &= -\frac{1}{6}[(B\ot A^T)^R+(\one\ot A^TB^T)^R+(\one\ot BA)^R\nn\allowdisplaybreaks\\
    &\phantom{=}+(\one\ot B^TA^T)^R+(A\ot B^T)^R+(AB\ot \one)^R]\nn\\
    &\phantom{=}+\frac{1}{3}[\tr(B)(\one\ot A^T)+\tr(A)(\one\ot B^T)\nn\\
    &\phantom{=}+\tr(A)\tr(B)(\one\ot\one)^R]  \,,
\end{align}
and a map from $3$ to $1$ tensor factors,
\begin{align}\label{eq:3to1}
& \Lambda_{\alpha_1,\mu_2}(A,B,C) = 
   \tr_{123}[F_{\mu_2}(\alpha_1) A\ot B\ot C\ot \one]\nn\\
   &=-\frac{1}{6}[(ACB)^T+(BAC)^T+(CBA)^T+(ABC)^T\nn\\
   &\phantom{=}+(BCA)^T+(CAB)^T]+\frac{1}{3}[\tr(B)\tr(C)A^T\nn\\
   &\phantom{=}+\tr(A)\tr(C)B^T+\tr(A)\tr(B)C^T]   \,.
\end{align}
\end{example}

\begin{example}[Case $n=5,k=2,d=2$] 
The possible choices for $\nu$ are $(3),(2,1)$ and $(1,1,1)$, and the trivial choice $(1)$ for $\beta$. Again the antisymmetric space labelled by $(1,1,1)$ is not represented. Allowed combinations of $\nu$ and $\beta$ are those that differ by two boxes, namely
\begin{align}
&(\beta,\nu_1) = \Big(\ydiagram{1}, \ydiagram{3}\Big),
&
(\beta,\nu_2) = \Big(\ydiagram{1}, \ydiagram{2,1}\Big). 
\end{align}
Now consider the projector $F_{\nu_2}(\beta)$. In this case the normalisation constant is $\gamma_{\nu_1}(\alpha)=3$. Young projectors $P_{\beta}$ and $P_{\nu}$ are given by
\begin{align}
    P_{\beta}&=\id\,,\nn\\
    P_{\nu}&=\frac{1}{3}\left[2\id-(123)-(132)\right]\,.
\end{align}
With the recipe given in~\eqref{Fgen1} and~\eqref{sigmagen1} one has $\mathcal{S}_{5,2}=S(3)$. Noting that $\sigma^{(2)}=(25)^{T_5}(34)^{T_4}$, then 
\begin{align}
F_{\nu}(\beta)&=\frac{1}{9}\Bigg\{\Big[2\id-(123)-(132)\Big]\Big[\id\sigma^{(2)}\id\nonumber\\
&+(12)\sigma^{(2)}(12)+(23)\sigma^{(2)}(23)\nonumber\\
&+(13)\sigma^{(2)}(13)+(132)\sigma^{(2)}(123)\nonumber\\
&+(123)\sigma^{(2)}(132)\Big]\Bigg\}\,.
\end{align}

The projector $F_{\nu}(\beta)$ 
gives with 
Proposition~\ref{prop:general_pos_map}
the following positive map from 3 to 2 tensor factors,
\begin{align}
& \Lambda_{\beta,\nu}(A,B,C) = 
  \tr_{123}[F_{\nu}(\beta) A\ot B\ot C\ot \one\ot \one]\nn\\
  &=\frac{1}{9}\Big\{4\tr(A)\tr(B)(\one\ot C)^R+2\tr(A)\tr(C)(\one\ot B^T)^R\nonumber\\
  &\phantom{=}+2\tr(B)\tr(C)(\one\ot A)^R+(\one\ot BCA)^R(\one\ot ACB)^R\nonumber\allowdisplaybreaks\\
  &\phantom{=}-2(\one\ot CBA)^R-(\one\ot BAC)^R-2(\one\ot ABC)^R\nonumber\\
  &\phantom{=}-\tr(A)\tr(C)(\one\ot B)^R-3(\one\ot CAB)^R\nonumber\\
  &\phantom{=}-\tr(B)\tr(C)(\one\ot A)^R\Big\}\,.
\end{align}
\end{example}

Again the same operations appear: trace, transposition, and reshuffling. In general, it is possible to work with larger number of partial transposes and systems by following~\cite{Mozrzymas2021optimalmultiport}. 
However, increasing the number of system and partial transpositions likely requires 
a dedicated symbolic software.

\section{Conclusions}
We have introduced a formalism that connects the partial transpose and reshuffling operations with 
entanglement witnesses that have $U^{\otimes (n-k)}\otimes {\bar{U}}^{\otimes k}$-invariance.
This allows to construct entanglement witnesses starting from positive maps, and vice versa, 
to obtain positive maps from witnesses and block-positive operators. 
Because of the connection with the walled Brauer algebra, 
these maps allow for closed-form expressions involving 
partial trace, 
partial transpose, 
and reshuffling operations.

For further research, it would be interesting to understand whether immanant-type inequalities related to
the walled Brauer algebra can be developed in analogy
to Ref.~\cite{huber2021matrix}.
Also, inspired by Ref.~\cite{holmes2021nonlinear}, 
it would be interesting to understand the power of nonlinear quantum algorithms 
whose transformations are given by the walled Brauer algebra.

\onecolumngrid

\appendix
\section{Primer on representation theory of the symmetric group and the walled Brauer algebra}
\label{appA:rep_th}
In the first part of this Appendix, we briefly describe the representation theory of the symmetric group $S_n$ with the corresponding group algebra $\mathbb{C}[S_n]$. Next, we state the most basic facts about the irreducible representations of the algebra of partially transposed permutation operators. For the full presentation, we refer the reader to the manuscripts cited here and in the main text.

A partition $\alpha$ of a natural number $n$, denoted by $\alpha \vdash n$, is a sequence of positive numbers $\alpha=(\alpha_1,\alpha_2,\ldots,\alpha_r)$ such that
\begin{align}
\alpha_1\geq \alpha_2\geq \ldots \geq \alpha_r,\qquad \sum_{i=1}^r\alpha_i=n\,.
\end{align}
This can be represented graphically: every partition can be visualised as a \textit{Young diagram} --a collection of boxes arranged in left-justified rows. Later, when having two Young diagrams $\alpha \vdash n-1$ and $\mu \vdash n$, we will write $\mu \in \alpha$ to say that the diagram $\mu$ is obtained from the diagram $\alpha$ by adding a single box.

Now, let us consider a permutational representation of the symmetric group $S_n$ in the Hilbert space $\mathcal{H}=(\mathbb{C}^d)^{\otimes n}$. The elements of $S_n$ permute vectors in $(\mathbb{C}^d)^{\otimes n}$ according to a given permutation $\pi$:
\begin{align}
\forall \pi\in S_n \qquad \pi |v_1\>\otimes |v_2\>\otimes
	\ldots \otimes |v_n\>:=|v_{\pi ^{-1}(1)}\>\otimes |v_{\pi
			^{-1}(2)}\>\otimes \ldots \otimes |v_{\pi ^{-1}(n)}\>\,.
\end{align}
Formally, we should distinguish between an abstract permutations $\pi,\sigma,\ldots \in S_n$ and their representations. However, to compress the notation in the both cases we use the same symbols whenever it is clear from the context.

The permutation representation of $S_n$ extends to the
representation of the group algebra $\mathbb{C}[S_n]$,
\begin{align}
\label{CSn}
\mathbb{C}[S_n]:= \operatorname{span}_{\mathbb{C}}\{\pi :\pi \in S_n\}\subset \operatorname{Hom}(\mathcal{(\mathbb{C}}^{d})^{\otimes n})\,.
\end{align}

There is a one-to-one correspondence between the set of all Young diagrams and the irreducible representations of $S_n$. Namely, for a fixed number $n$, every Young diagram labels different irreducible representation of $S_n$. This means that the number of Young diagrams, for a given $n$, determines the number of nonequivalent irreps of $S_n$ in an abstract decomposition. Working in the representation space $(\mathbb{C}^{d})^{\otimes n}$, in every decomposition of $S_n$ into irreps, we take Young diagrams $\alpha$ whose height $\operatorname{ht}(\alpha)$ is at most $d$.

The diagonal action of the general linear group $GL_d(\mathbb{C})$ of invertible complex matrices and of the symmetric group on $(\mathbb{C}^d)^{\otimes n}$ commute, 
\begin{align}
\pi(X\otimes \ldots \otimes X)=(X\otimes \ldots \otimes X)\pi,
\end{align}
where $\pi \in S_n$ and $X\in GL_d(\mathbb{C})$. Due to the above relation, there exists a basis in which the tensor product space $(\mathbb{C}^d)^{\otimes n}$ can be decomposed as
\begin{align}
\label{wedd}
(\mathbb{C}^d)^{\otimes n}=\bigoplus_{\substack{\alpha \vdash n \\ \operatorname{ht}(\alpha)\leq d}} \mathcal{U}_{\alpha}\otimes \mathcal{S}_{\alpha}\,,
\end{align}
where the symmetric group $S_n$ acts on the representation space $\mathcal{S}_{\alpha}$, and the general linear group $GL_d(\mathbb{C})$ acts on the representation space $\mathcal{U}_{\alpha}$, labelled by the same partitions. The unitary group $U(d)$  satisfies the same kind of decomposition, with the same set of irreps.

From the decomposition given in expression~\eqref{wedd} we deduce that for a given irrep $\alpha$ of $S_n$, the space $\mathcal{U}_{\alpha}$ is a multiplicity space of dimension $m_{\alpha}$ (multiplicity of irrep $\alpha$), while the space $\mathcal{S}_{\alpha}$ is a representation space of dimension $d_{\alpha}$ (dimension of irrep $\alpha$).
Finally, with every subspace $\mathcal{U}_{\alpha}\otimes \mathcal{S}_{\alpha}$ we associate a \textit{Young projector}
\begin{align}
\label{Yng_proj}
P_{\alpha}=\frac{d_{\alpha}}{n!}\sum_{\pi \in S_n}\chi^{\alpha}(\pi^{-1})\pi,\quad \text{with}\quad \tr P_{\alpha}=m_{\alpha}d_{\alpha},
\end{align}
where $\chi^{\alpha}(\pi^{-1})$ is the irreducible character associated with the irrep indexed by $\alpha$. The symbols $m_{\alpha}, d_{\alpha}$ denote, respectively, the multiplicity and the dimension of an irrep $\alpha$.

Having the definition of the algebra $\mathbb{C}[S_n]$ [Eq.~\eqref{CSn}], we naturally extend it to the algebra of partially transposed permutation operators (walled Brauer algebra) with respect to last subsystem
\begin{align}
	\mathcal{WBA}(n,1,d):= \operatorname{span}_{\mathbb{C}}\{\pi^{T_n}:\sigma \in S_n\}\subset \operatorname{Hom}(\mathcal{(\mathbb{C}}^{d})^{\otimes n}),
\end{align}
where $T_n$ denotes partial transposition with respect to $n-$th system.
The elements of $\mathcal{WBA}(n,1,d)$ commute with the deformed action of unitary group $U(d)$ of the form $U^{\otimes (n-1)}\otimes \bar{U}$, where the bar denotes complex conjugation. Moreover, there exists a basis in which we have a decomposition of $(\mathbb{C}^{d})^{\otimes n}$  similar as it is in Eq.~\eqref{wedd}. Namely, we have
\begin{align}
\label{wedd2}
(\mathbb{C}^d)^{\otimes n}=\bigoplus_{b} \mathcal{U}_{b}\otimes \mathcal{S}_{b}^{\mathcal{WBA}}\,,
\end{align}
where the index $b$ labels all irreducible representations of the considered algebra $\mathcal{WBA}(n,1,d)$. In fact, every partially transposed permutation operator is represented non-trivially on $\mathcal{S}_{a}^{WBA}$, and it gives an irreducible matrix representation of the walled Brauer algebra. The algebra $\mathcal{WBA}(n,1,d)$ is a direct sum of two ideals,  
\begin{equation}
\label{A_decomp}
\mathcal{WBA}(n,1,d)=\mathcal{M}\oplus \mathcal{S}=F \ \mathcal{WBA}(n,1,d) \  F\oplus
(id_{\mathcal{WBA}}-F)\mathcal{WBA}(n,1,d)(id_{\mathcal{WBA}}-F),
\end{equation}
where the idempotent $F=\sum_{\alpha \vdash n-2}\sum_{\mu \in \alpha}F_{\mu }(\alpha )$ is the identity on the ideal $\mathcal{M}$, i.e., $F=id_{\mathcal{M}}$, and $\id_{\mathcal{WBA}}$ is the identity operator on the whole space. The operators $F_{\mu }(\alpha )$ are projectors onto the irreps of $\mathcal{WBA}(n,1,d)$ contained in the ideal  $\mathcal{M}$. This means that the index $a$ in Eq.~\eqref{wedd2} is in fact a pair $b=(\alpha,\mu)$, such that $\mu \in \alpha$ for the irreps contained in $\mathcal{M}$. The explicit form of the projectors $F_{\mu }(\alpha )$, due to results from Ref.~\cite{StuNJP}, is
\begin{align}
\label{explicit}
F_{\mu}(\alpha)=\frac{1}{\gamma_{\mu}(\alpha)}P_{\mu}\sum_{a=1}^{n-1}(a,n-1)P_{\alpha}\otimes (n-1,n)^{T_n}(a,n-1)\,,
\end{align}
where $P_{\alpha},P_{\mu}$ are the Young projectors onto irreducible spaces labelled by the Young diagrams $\alpha \vdash n-2$ and $\mu \vdash n-1$, respectively, defined in~\eqref{Yng_proj}, and $\gamma_{\mu}(\alpha)=(n-1)\frac{m_{\mu}d_{\alpha}}{m_{\alpha}d_{\mu}}$.
The described algebra can be also studied for larger number of partial transposes, i.e. $\mathcal{WBA}(n,k,d)$, where $k>1$. In the general case, the operator $F_{\mu}(\alpha)$ has a similar, however more complicated form to those from~\eqref{explicit}:
\begin{equation}
\label{Fgen1}
F_{\mu}(\alpha)=\frac{1}{\gamma_{\mu}(\alpha)}P_{\mu}\sum_{\eta \in \mathcal{S}_{n,k}}\eta^{-1}P_{\alpha}\otimes \sigma^{(k)} \eta,\quad \text{with}\quad \gamma_{\mu}(\alpha)=k!\binom{n-k}{k}\frac{m_{\mu}}{m_{\alpha}}\frac{d_{\alpha}}{d_{\mu}},
\end{equation}
where the sum runs over the coset $\mathcal{S}_{n,k}= \frac{S(n-k)}{S(n-2k)}$ and $\alpha \vdash (n-2k)$, $\mu \vdash (n-k)$ such that $\mu\in\alpha$. The latter means that $\mu$ can be obtained from $\alpha$ by adding $k$ boxes. Moreover, by writing $\sigma^{(k)}$ we understand the following operator
\begin{equation}
\label{sigmagen1}
\sigma^{(k)}:=(n-2k+1,n)^{T_n}(n-2k+2,n-1)^{T_{n-1}}\cdots (n-k,n-k+1)^{T_{n-k+1}},
\end{equation}
where $T_n,T_{n-1},\ldots,T_{n-k+1}$ denote partial transposition with respect to particular subsystems. Note that inserting $k=1$ we get $\mathcal{S}_{n,1}=S(n-1)/S(n-2)$, which is composed from the permutations of the form $(a,n-1)$, for $1\leq a \leq n-1$. Moreover, we have in this case $k!\binom{n-k}{k}=1!\binom{n-1}{1}=n-1$ and $\sigma^{(1)}=(n-1,n)^{T_n}$ as it should be. 
For more details see for example Ref.~\cite{Mozrzymas2021optimalmultiport}, where this algebra has been studied in the context of quantum teleportation protocols.

\section{Eggeling-Werner maps}\label{app:wernermaps}
Following Ref.~\cite{Eggeling2001}, we recall the entanglement properties of tripartite Werner states and obtain associated positive maps.
Moreover, we list all the maps that by performing partial transposes over different subsystems in Eq.~\eqref{Wernermaps}.

Every tripartite $U \ot U \ot U$-invariant state $\varrho$ (also known as Werner state) is uniquely characterised by $(r_+,r_-,r_0,r_1,r_2)\in\mathbb{R}^6$ satisfying
\begin{align}
    r_+,r_-,r_0\geq 0, \quad r_++r_-+r_0=1 \quad \text{and}\quad r_1^2+r_2^2+r_3^2\leq r_0^2\,.
\end{align}
where 
\begin{align}
    r_+=\frac{d}{6}(d^2+3d+2)c_+,\qquad r_-=\frac{d}{6}(d^2-3d+2)c_-\,, \qquad
    r_i=\frac{2d}{3}(d^2-1)c_i,\quad i=0,1,2,3\,.
    \label{cintermsofr}
\end{align}
Each $\varrho$ can thus be written in the following decomposition
\begin{align}
    \varrho=\sum_{k=+,-,0,1,2,3} c_k R_k\,.
    \label{rhofromRk}
\end{align}
Relating this to the decomposition of $\varrho$ in terms of permutation operators,
\begin{eqnarray}
    \varrho=\alpha_1\id+\alpha_2(12)+\alpha_3(23)+\alpha_4(31)+\alpha_5(123)+\alpha_6 (321)\,,
    \label{rhointermsofalphas}
\end{eqnarray}
where the coefficients $\alpha_k$ are defined as
\begin{align}\label{alphas_i}
    \alpha_1&=\frac{c_+}{6}+\frac{c_-}{6}+\frac{2c_0}{3}\,, 
    &\alpha_2&=\frac{c_+}{6}-\frac{c_-}{6}-\frac{c_2}{\sqrt{3}}\,,
    &\alpha_3&=\frac{c_+}{6}-\frac{c_-}{6}+\frac{2c_1}{3},\nonumber\\ 
    \alpha_4&=\frac{c_+}{6}-\frac{c_-}{6}-\frac{c_1}{3}-\frac{c_2}{\sqrt{3}}\,,
    &\alpha_5&=\frac{c_+}{6}+\frac{c_-}{6}-\frac{c_0}{3}+\frac{ic_3}{\sqrt{3}}\,,
    &\alpha_6&=\frac{c_+}{6}+\frac{c_-}{6}-\frac{c_0}{3}-\frac{ic_3}{\sqrt{3}}\,,
\end{align}
and the coefficients $c_k$ for $k=+,-,0,1,2,3$ are defined in Eq.~\eqref{cintermsofr}.

Now for different values of $j$ we consider the following maps
\begin{align*}
    f_S(A)   &=\tr_1[\varrho^{T_s}A\ot\one\ot\one] \,, \\
    g_S(A,B) &=\tr_{12}[\varrho^{T_S}A\ot B\ot\one]\,.
\end{align*}

The results for all the possible subsets $S$ are given in Table~\ref{tablemaps}. Note that the map $f_3(A^T)$ is equivalent to the map $f_{13}(A)$, and similarly $f_2(A^T)=f_{12}(A)$. For the $g$ maps a similar thing happens, i.e. $g_{13}(A^T,B^T)=g_{23}(A,B)$ and $g_1(A^T,B^T)=g_2(A,B)$.

\begin{table}[tbp]
\centering
\begin{tabular}{@{}ll@{}}
%\toprule
map  &   evaluation\\ 
\midrule
$f_1$  & $\alpha_1\tr(A)(\one\ot\one)+\alpha_2(A^T\ot\one)+\alpha_3\tr(A)((\one\ot\one)^R)^{T_2}+\alpha_4(\one\ot A^T)+\alpha_5((A^T\ot\one)^R)^{T_2}+\alpha_6((\one\ot A)^R)^{T_2}$ \\ 
$f_2$  &  $\alpha_1\tr(A)(\one\ot\one)+\alpha_2(A^T\ot\one)+\alpha_3\tr(A)(\one\ot\one)^R+\alpha_4(\one\ot A)+\alpha_5(\one\ot A)^R+\alpha_6(A^T\ot\one)^R$\\ 
$f_3$  & $\alpha_1\tr(A)(\one\ot\one)+\alpha_2(A\ot\one)+\alpha_3\tr(A)(\one\ot\one)^R+\alpha_4(\one\ot A^T)+\alpha_5(A\ot\one)^R+\alpha_6(\one\ot A^T)^R$ \\ 
$f_{12}$ & $\alpha_1\tr(A)(\one\ot\one)+\alpha_2(A\ot\one)+\alpha_3\tr(A)(\one\ot\one)^R+\alpha_4(\one\ot A^T)+\alpha_5(\one\ot A^T)^R+\alpha_6(A\ot\one)^R$ \\ 
$f_{13}$ & $\alpha_1\tr(A)(\one\ot\one)+\alpha_2(A^T\ot \one)+\alpha_3\tr(A)(\one\ot\one)^R+\alpha_4(\one\ot A)+\alpha_5(A^T\ot\one)^R+\alpha_6(\one\ot A)^R$ \\ 
$f_{23}$ & $\alpha_1\tr(A)(\one\ot\one)+\alpha_2(A^T\ot\one)+\alpha_3\tr(A)((\one\ot\one)^R)^{T_2}+\alpha_4(\one\ot A^T)+\alpha_5((\one\ot A)^R)^{T_2}+\alpha_6((A^T\ot\one)^R)^{T_2}$ \\
\midrule
$g_1$  & $\alpha_1\tr(A)\tr(B)\one+\alpha_2\tr(A^TB)\one+\alpha_3\tr(A)B+\alpha_4\tr(B) A^T+\alpha_5BA^T+\alpha_6A^TB$ \\
$g_2$  &  $\alpha_1\tr(A)\tr(B)\one+\alpha_2\tr(AB^T)\one+\alpha_3\tr(A)B^T+\alpha_4\tr(B) A+\alpha_5B^TA+\alpha_6AB^T$\\
$g_3$  & $\alpha_1\tr(A)\tr(B)\one+\alpha_2\tr(AB)\one+\alpha_3\tr(A)B^T+\alpha_4\tr(B) A^T+\alpha_5A^TB^T+\alpha_6B^TA^T$ \\ 
$g_{12}$ & $\alpha_1\tr(A)\tr(B)\one+\alpha_2\tr(A^TB^T)\one+\alpha_3\tr(A)B^T+\alpha_4\tr(B) A^T+\alpha_5B^TA^T+\alpha_6A^TB^T$ \\ 
$g_{13}$ & $\alpha_1\tr(A)\tr(B)\one+\alpha_2\tr(A^TB)\one+\alpha_3\tr(A)B^T+\alpha_4\tr(B)A+\alpha_5AB^T+\alpha_6B^TA$ \\ 
$g_{23}$ & $\alpha_1\tr(A)\tr(B)\one+\alpha_2\tr(A^TB)\one+\alpha_3\tr(A)B+\alpha_4\tr(B) A^T+\alpha_5A^TB+\alpha_6BA^T$ \\ 
\bottomrule
\end{tabular}
\caption{The table list all the maps of the form Eqs.~\eqref{Wernermaps} and \eqref{Wernermaps2}. The maps are positive if and only if $\rho^{T_S}$ is block-positive with respect to the partition $1|23$ for the $f_S$, and block-positive for the partition $1|2|3$ for the $g_S$'s (Proposition~\ref{wernermapspositive}).}
\label{tablemaps}
\end{table}

\section{Technical proofs}\label{app:proofs}
In this appendix, we will give the technical proofs of the Propositions \ref{prop:k-1to1}, \ref{prop:1tok-1}, and \ref{prop:1tok-1permutation}. We state them again for clarity. Let us start with Prop.~\ref{prop:k-1to1}.
\begin{prop*}[\ref{prop:k-1to1}]
Let $X_1,\dots,X_k\in \mathcal{M}_d$ and $T_j$ denote partial transpose on $j$. Consider the cycle $(k\dots 1)=(1\dots k)^{-1}$. Then
\begin{align}
    \tr_{1\dots k\backslash k}\big[(k\dots 1)^{T_j} X_1\ot\dots\ot X_k\big]=
        \begin{cases}
            X_1\cdots X_j^T\dots X_{k-1} X_k &\text{for } j\neq k \\
            (X_{1}\cdots X_{k-1})^T X_k &\text{for } j=k
        \end{cases}
\end{align}
\begin{align}
    \tr_{1\dots k\backslash 1}\big[(1\dots k)^{T_j} X_1\ot\dots\ot X_k\big]=
        \begin{cases}
             X_kX_{k-1}\cdots X_j^T \dots X_1  &\text{for } j\neq 1 \\
              X_k^TX_{k-1}\cdots X_2 X_1  &\text{for } j=1
        \end{cases}
\end{align}
\end{prop*}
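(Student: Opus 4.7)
The plan is to establish both identities by direct computation in index notation, generalizing the swap trick of Eq.~\eqref{productmatrices}. I will treat the first identity in detail; the second follows by the same reasoning applied to the inverse cycle.

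First I would write the cycle $(k\dots 1)$ explicitly as a tensor product of rank-one operators. Since $(k\dots 1)^{-1}(m)=m+1$ for $m<k$ and $(k\dots 1)^{-1}(k)=1$, one obtains
\begin{equation*}
(k\dots 1) \;=\; \sum_{i_1,\dots,i_k} |i_2\rangle\langle i_1| \otimes |i_3\rangle\langle i_2| \otimes \cdots \otimes |i_k\rangle\langle i_{k-1}| \otimes |i_1\rangle\langle i_k|,
\end{equation*}
so the $m$-th tensor factor is $|i_{m+1}\rangle\langle i_m|$ for $m<k$ and the $k$-th factor is $|i_1\rangle\langle i_k|$. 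Applying $T_j$ simply swaps the bra and ket at site $j$: for $j<k$ the $j$-th factor becomes $|i_j\rangle\langle i_{j+1}|$, and for $j=k$ the $k$-th factor becomes $|i_k\rangle\langle i_1|$.

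Next I would multiply by $X_1\otimes\cdots\otimes X_k$ and trace over sites $1,\dots,k-1$, using $\tr[|a\rangle\langle b|Y]=\langle b|Y|a\rangle$. Each traced site $m\in\{1,\dots,k-1\}\setminus\{j\}$ contributes the scalar $(X_m)_{i_m,i_{m+1}}$, whereas a traced $j<k$ contributes $(X_j)_{i_{j+1},i_j}=(X_j^T)_{i_j,i_{j+1}}$. The indices $i_2,\dots,i_{k-1}$ then chain these matrix elements into a single product. For $j\neq k$ the chained coefficient is $(X_1\cdots X_j^T\cdots X_{k-1})_{i_1,i_k}$, and the remaining untraced factor is $|i_1\rangle\langle i_k|X_k$, so resumming over $i_1,i_k$ yields exactly $X_1\cdots X_j^T\cdots X_{k-1}X_k$. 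For $j=k$ the chained coefficient is $(X_1\cdots X_{k-1})_{i_1,i_k}$, but the untraced factor is now $|i_k\rangle\langle i_1|X_k$ with the ket and bra swapped; resumming $\sum_{i_1,i_k}(X_1\cdots X_{k-1})_{i_1,i_k}|i_k\rangle\langle i_1|=(X_1\cdots X_{k-1})^T$ produces the claimed $(X_1\cdots X_{k-1})^T X_k$.

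The second identity follows by the same argument applied to
\begin{equation*}
(1\dots k) \;=\; \sum_{i_1,\dots,i_k}|i_k\rangle\langle i_1|\otimes|i_1\rangle\langle i_2|\otimes\cdots\otimes|i_{k-1}\rangle\langle i_k|,
\end{equation*}
where now the partial trace runs over sites $2,\dots,k$; the chaining of indices produces the reversed matrix product, and the transpose of the entire product appears precisely when $j=1$ and the partial transpose lands on the sole untraced register. The main subtlety, and the only step requiring real care, is this latter endpoint case: when the transposed leg is the untraced one, the transpose cannot be absorbed into a single factor and must instead be pulled outward as a transpose of the entire accumulated product. The remainder is routine index bookkeeping, and the graphical notation of Appendix~\ref{app:graphicaltrick} can be used to automate it.
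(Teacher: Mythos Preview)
Your proposal is correct and follows essentially the same approach as the paper's proof in Appendix~\ref{app:proofs}: a direct index computation in which the cycle is written as a sum of rank-one tensors, the partial transpose on site $j$ swaps ket and bra there, and the partial traces chain the resulting matrix elements into a single product. The paper only spells out the case $j\neq k$ of the first identity and defers the remaining cases to ``the other relations can be obtained in a similar way,'' whereas you additionally treat the endpoint case explicitly and explain why the transpose of the full accumulated product appears precisely when $T_j$ hits the untraced register.
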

\begin{proof}
Let $\{|\alpha\rangle\}$ be an orthonormal basis for $\mathbb{C}^d$. 
Decompose 
$X_i=\sum_{\alpha_i,\beta_i=1}^d \chi_{\alpha_i\beta_i}|\alpha_i\rangle\langle\beta_i|$ 
and consider that 
$(k\dots 1)^{T_j}$ 
can be written as
\begin{equation}
(|i_1 i_2\dots i_j\dots i_k\rangle\langle i_ki_1\dots i_{j-1}\dots i_{k-1}|)^{T_j}=|i_1 i_2\dots i_{j-1}i_{j-1} i_{j+1}\dots i_k\rangle\langle i_k i_1\dots i_{j-2}i_j i_{j}\dots i_{k-1}| \,.
\end{equation} 
Then, for $j\neq k$
\begin{align}
    &\quad\,\tr_{1\dots k\backslash k}\big[(k\dots 1)^{T_j} X_1\ot\dots\ot X_k\big]
    \nonumber\\\nonumber
    &=
    \tr_{1\dots k\backslash k}
    \big[
    (k\dots 1)^{T_j}
    \sum_{\substack{\alpha_1,\dots, \alpha_k, \\ 
                    \beta_1,\dots, \beta_k=1}
         }^d 
    \chi_{\alpha_1\beta_1}
    \cdots
    \chi_{\alpha_k\beta_k}
    |\alpha_1\rangle\langle\beta_1|
    \ot\dots\ot
    |\alpha_j\rangle\langle\beta_j|
    \ot\dots\ot
    |\alpha_k\rangle\langle\beta_k|
    \big]
    \\\nonumber
    &=\tr_{1\dots k\backslash k}
    \big[
    \sum_{\substack{\alpha_1,\dots, \alpha_k, \\ 
                    \beta_1,\dots, \beta_k=1}
         }^d 
    \chi_{\alpha_1\beta_1}
    \cdots\chi_{\alpha_k\beta_k}
    |i_1\rangle\langle i_k|\alpha_1\rangle\langle\beta_1|
    \ot\dots\ot
    |i_{j-1}\rangle\langle i_j|\alpha_j\rangle\langle\beta_j|
    \ot\dots\ot
    |i_k\rangle\langle i_{k-1}|\alpha_k\rangle\langle\beta_k|
    \big]   
    \nonumber\\
    &=\chi_{\beta_{1}\alpha_{2}}\cdots\chi_{\beta_{j}\beta_{j-1}}\chi_{\alpha_{j}\alpha_{j+1}}\cdots\chi_{\beta_{k-1}\alpha_{k}}|\alpha_{1}\rangle\langle\beta_{k}|\\\nonumber
    &=X_{1}\cdots X_j^T\cdots X_{k-1} X_k\,.
\end{align}
The other relations can be obtained in a similar way.
\end{proof}

Let us now move to the $1$ to $k-1$ tensor factor maps. Consider Prop.~\ref{prop:1tok-1}.
\begin{prop*}[\ref{prop:1tok-1}]
Let $A\in\mathcal{M}_d$, $(1\dots k)$ be a cyclic permutation and $T_k$ be the partial transpose on site $k$. Then
\begin{align}
    \tr_{1}\big[(1\dots k)^{T_k} A\ot \one\ot\dots\ot \one\big]=\left[\Big[\big[[A\ot\one\ot \one\ot \dots \ot \one]^{R_{k-1,k-2}}\big]^{R_{k-1,k-3}}\Big]^{\dots}\right]^{R_{k-1,1}}\,,
\end{align}
where $R_{k,l}$ stands for reshuffling of sites $k,l$.
\end{prop*}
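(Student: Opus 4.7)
The plan is to compute both sides in coordinates and match them, using an induction on the number of reshufflings for the right-hand side. The guiding intuition, already present in Example~\ref{example11}, is that each $R_{k-1,j}$ drags the current value at ket position $k-1$ into bra position $j$ while pulling the original value of bra position $j$ into ket position $k-1$, so that the cascade deposits the $m_j$'s along the bra while walking the ket through to $\beta$.

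First I would write the left-hand side in coordinates. From $\sigma\ket{v_1\cdots v_k}=\ket{v_{\sigma^{-1}(1)}\cdots v_{\sigma^{-1}(k)}}$ with $\sigma=(1\dots k)$ one reads off
\begin{equation*}
(1\dots k)^{T_k}=\sum_{i_1,\dots,i_k}\ket{i_k,i_1,\dots,i_{k-2},i_k}\bra{i_1,\dots,i_{k-1},i_{k-1}},
\end{equation*}
the partial transpose on site $k$ swapping the $k$-th ket index $i_{k-1}$ with the $k$-th bra index $i_k$. Multiplying by $A\otimes\one^{\otimes(k-1)}$ with $A=\sum_{\alpha,\beta}\chi_{\alpha\beta}\ket{\alpha}\bra{\beta}$ imposes $i_1=\alpha$, $i_j=m_j$ for $2\le j\le k-1$, and $i_{k-1}=m_k$, while $\tr_1$ identifies $i_k$ with $\beta$. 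The left-hand side becomes
\begin{equation*}
\sum_{\alpha,\beta,m_2,\dots,m_k}\chi_{\alpha\beta}\,\delta_{m_{k-1},m_k}\ket{\alpha,m_2,\dots,m_{k-2},\beta}\bra{m_2,\dots,m_{k-1},m_k}.
\end{equation*}

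For the right-hand side I would apply the reshufflings to $A\otimes\one^{\otimes(k-2)}$ and track indices inductively. Since each $R_{k-1,j}$ touches only ket position $k-1$ and bra position $j$, the ket positions $1,\dots,k-2$ retain $(\alpha,m_2,\dots,m_{k-2})$ throughout and bra position $k-1$ retains $m_{k-1}$. I would prove by induction on $N=0,1,\dots,k-2$ that after applying $R_{k-1,k-2},R_{k-1,k-3},\dots,R_{k-1,k-1-N}$ in sequence the ket position $k-1$ holds $m_{k-1-N}$ (with the convention $m_1:=\beta$), and bra position $k-1-n$ holds $m_{k-n}$ for $n=1,\dots,N$. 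The inductive step follows because $R_{k-1,k-2-N}$ swaps the current $m_{k-1-N}$ at ket position $k-1$ with the as-yet untouched value $m_{k-2-N}$ at bra position $k-2-N$. Running all $k-2$ steps yields ket $=(\alpha,m_2,\dots,m_{k-2},\beta)$ and bra $=(m_2,m_3,\dots,m_{k-1},m_{k-1})$, which matches the left-hand side after the delta $\delta_{m_{k-1},m_k}$ collapses $m_k=m_{k-1}$.

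The main obstacle is the bookkeeping in the inductive step — keeping straight the distinction between the value at ket position $k-1$ \emph{just before} the swap (which is deposited into the bra) and the value at bra position $j$ \emph{just before} the swap (which is picked up by the ket). A purely diagrammatic alternative is available: partial trace, partial transpose, and reshuffling all correspond to local strand manipulations in Appendix~\ref{app:graphicaltrick}, and the cascade becomes a visual rewiring. This is in effect the route taken by Proposition~\ref{prop:1tok-1permutation}, where the entire concatenation is packaged as a single permutation $\pi$ acting after the ``ket-to-bra'' map $\tau$.
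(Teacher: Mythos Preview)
Your argument is correct and follows essentially the same route as the paper's proof: both compute the two sides in coordinates (the paper specialising to rank-one $A=\ket{\alpha}\bra{\beta}$, you carrying the coefficient $\chi_{\alpha\beta}$) and then match the resulting index patterns. Your explicit induction on $N$ for the right-hand side is a mildly more rigorous version of the paper's ``$\dots$'' in its step-by-step reshuffling computation, but the underlying mechanism is identical.
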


\begin{proof}
Let $A=|\alpha\rangle\langle \beta|$ and let $(1\dots k)=|j_1 j_2\dots j_{k-1}j_k\rangle\langle j_2j_3\dots j_kj_{1}|$. Then the left hand side reads:
\begin{align}
    \text{LHS}&=\tr_{1}
    \big[(|j_1 j_2\dots k_{k-1}j_{1}\rangle\langle j_2j_3\dots j_{k}j_k|)
    \,\cdot\,
    ( |\alpha\rangle\langle\beta|\ot |i_2\rangle\langle i_2|\ot |i_3\rangle\langle i_3|
    \ot\dots\ot
    |i_k\rangle\langle i_k|)\big]
    \nonumber\\
    &=
    \tr_{1}\big[|j_1\rangle\langle j_2|\alpha\rangle\langle\beta|\ot|j_2\rangle\langle j_3|i_2\rangle\langle i_2|\ot\dots\ot
    |j_{k-1}\rangle\langle j_{k}|i_{k-1}\rangle\langle i_{k-1}|\ot|j_{1}\rangle\langle j_{k}|i_{k}\rangle\langle i_{k}|\big]
    \nonumber\\
    &=
    \delta_{j_2 \alpha}\delta_{j_1 \beta}\delta_{j_3 i_2}\cdots\delta_{j_{k} i_{k-1}}\delta_{j_k i_k}|j_2\rangle\langle i_2|\ot |j_3\rangle\langle i_3|
    \ot\dots\ot
    |j_{k-1}\rangle\langle i_{k-1}|\ot|j_{1}\rangle\langle i_k|\nonumber\\
    &=
    |\alpha\rangle\langle j_3|\ot |j_3\rangle\langle j_4|\ot\dots\ot |j_{k-1}\rangle\langle j_{k}|\ot|\beta\rangle\langle j_k|\,.
\end{align}
The right hand side reads
\begin{align}
    \text{RHS}&=
    \Big\{
    \big[
    (
    |\alpha\rangle\langle \beta|\ot |i_2\rangle\langle i_2|\ot\cdots\ot |i_{k-2}\rangle\langle i_{k-2}|\ot|i_{k-1}\rangle\langle i_{k-1}|
    )^{R_{k-1,k-2}}
    \big]{}^{R_{k-1,k-3}}{}^{\dots}
    \Big\}{}^{R_{k-1,1}}\nn\\
    &=
    \Big\{
    \big[
    |\alpha\rangle\langle \beta|\ot |i_2\rangle\langle i_2|
    \ot\dots\ot 
    |i_{k-2}\rangle\langle i_{k-1}|\ot|i_{k-2}\rangle\langle i_{k-1}|\big]^{R_{k-1,k-3}}{}^{\dots}\Big\}{}^{R_{k-1,1}}\nn\\
    &=\big[|\alpha\rangle\langle \beta|\ot |i_2\rangle\langle i_3|
    \ot\dots\ot 
    |i_{k-2}\rangle\langle i_{k-1}|\ot|i_{2}\rangle\langle i_{k-1}|\big]^{R_{k-1,1}}\nn\\
    &=|\alpha \rangle\langle i_2|\ot |i_2\rangle\langle i_3|
    \ot\dots\ot
    |i_{k-2}\rangle\langle i_{k-1}|\ot|\beta\rangle\langle i_{k-1}|\,.
\end{align}
Identifying $j_m$ with $i_{m-1}$ we see that the left hand side equals the right hand side. 
This ends the proof. 
\end{proof}

Finally, we consider the $1$ to $k-1$ tensor factors map written with a single permutation, i.e., Proposition \ref{prop:1tok-1permutation}.
\begin{prop*}[\ref{prop:1tok-1permutation}]
Let $A\in\mathcal{M}_d$, $(1\dots k)$ be a cyclic permutation and $T_k$ be the partial transpose on site $k$. Then
\begin{align}
    \tr_{1}[(1\dots k)^{T_k} A\ot \one\ot\dots\ot \one]=[\tau^{-1}\pi\tau(A \ot\one\ot\dots\ot \one)],
\end{align}
where $\pi$ is defined in Eq.~\eqref{permutationreshuffling} and $\theta$ is defined in Eq.~\eqref{thetakettobra}.
\end{prop*}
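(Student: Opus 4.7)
The plan is to mirror the strategy used in the proof of Proposition~\ref{prop:1tok-1}: evaluate both sides on a rank-one operator $A=\ket{\alpha}\bra{\beta}$ and verify that they agree. By linearity in $A$, this is enough.

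For the left hand side, the proof of Proposition~\ref{prop:1tok-1} already supplies the closed form
\begin{equation*}
  \tr_{1}\big[(1\dots k)^{T_k}\, A\ot\one\ot\dots\ot\one\big]
  = \ket{\alpha}\bra{j_3}\ot\ket{j_3}\bra{j_4}\ot\cdots\ot\ket{j_{k-1}}\bra{j_k}\ot\ket{\beta}\bra{j_k},
\end{equation*}
with implicit summation over the dummy indices $j_3,\dots,j_k$. So only the right hand side needs to be unpacked.

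For the right hand side, I would first compute $\tau(A\ot\one\ot\dots\ot\one)$, which is a vector in $(\mathbb{C}^d)^{\ot 2(k-1)}$ whose entries at positions $1,\dots,2(k-1)$ are $(\alpha,i_2,\dots,i_{k-1},\beta,i_2,\dots,i_{k-1})$, summed over the $i$-indices. Writing $k'=k-1$, the cycle $\pi$ is the identity outside $\{k',k'+1,\dots,2k'-1\}$, and on its support one has $\pi^{-1}(k')=k'+1$, $\pi^{-1}(k'+m)=k'+m+1$ for $1\le m\le k'-2$, and $\pi^{-1}(2k'-1)=k'$. Reading off the new entries via $\pi\ket{v_1\dots v_{2k'}}=\ket{v_{\pi^{-1}(1)}\dots v_{\pi^{-1}(2k')}}$, then applying $\tau^{-1}$ to recast the last $k-1$ sites as bras, produces an operator that, after relabelling $i_m\leftrightarrow j_{m+1}$, coincides with the LHS expression above.

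The main obstacle is purely combinatorial bookkeeping: correctly inverting the cycle $\pi$ and keeping track of where each dummy index lands in the doubled space. A cleaner route that bypasses the rank-one computation is to observe that, under the $\tau$-isomorphism, each reshuffling $R_{k-1,l}$ featuring in Proposition~\ref{prop:1tok-1} becomes the transposition of sites $(k-1,\,(k-1)+l)$ on $(\mathbb{C}^d)^{\ot 2(k-1)}$. Composing these transpositions in the order the reshufflings are applied yields $(k',k'+1)(k',k'+2)\cdots(k',2k'-1)$, and a short verification (tracing the orbits of $k'$ and of $2k'-1$) shows that this product equals the cycle $\pi$. Combined with Proposition~\ref{prop:1tok-1}, this gives the claimed identity without further index juggling.
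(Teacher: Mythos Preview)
Your primary approach—evaluating both sides on a rank-one $A=\ket{\alpha}\bra{\beta}$ and matching the resulting index expressions after tracking $\pi^{-1}$—is exactly what the paper does, and your bookkeeping is correct.

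Your alternative route is not in the paper and is genuinely cleaner: under the $\tau$-isomorphism each reshuffling $R_{k-1,l}$ becomes the transposition $(k',\,k'+l)$ on $(\mathbb{C}^d)^{\ot 2k'}$, and the product $(k',k'+1)(k',k'+2)\cdots(k',2k'-1)$ (with the rightmost factor applied first, corresponding to $R_{k-1,k-2}$) is indeed the cycle $\pi$ of Eq.~\eqref{permutationreshuffling}. Combined with Proposition~\ref{prop:1tok-1} this gives the statement without any index juggling, and it explains \emph{why} the particular cycle $\pi$ arises rather than merely verifying it. The paper's direct computation, by contrast, is self-contained and does not rely on Proposition~\ref{prop:1tok-1}.
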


\begin{proof}
Let $(1\dots k)=|j_1 j_2\dots j_{k-1}j_k\rangle\langle j_2j_3\dots j_kj_{1}|$ and let $A=|\alpha\rangle\langle \beta|$. Then, the left hand side reads
\begin{align}
    \text{LHS}&=\tr_{1} \big[
    (|j_1 j_2\dots k_{k-1}j_{1}\rangle\langle j_2j_3\dots j_{k}j_k|) \, \cdot\, ( |\alpha\rangle\langle\beta|\ot |i_2\rangle\langle i_2|\ot |i_3\rangle\langle i_3|
    \ot\dots \ot 
    |i_k\rangle\langle i_k|) \big]\nonumber\\
    &=\tr_{1}[|j_1\rangle\langle j_2|\alpha\rangle\langle\beta|\ot|j_2\rangle\langle j_3|i_2\rangle\langle i_2|
    \ot\dots \ot
    |j_{k-1}\rangle\langle j_{k}|i_{k-1}\rangle\langle i_{k-1}|\ot|j_{1}\rangle\langle j_{k}|i_{k}\rangle\langle i_{k}|]\nonumber\\
    &=\delta_{j_2 \alpha}\delta_{j_1 \beta}\delta_{j_3 i_2}\cdots\delta_{j_{k} i_{k-1}}\delta_{j_k i_k}|j_2\rangle\langle i_2|\ot |j_3\rangle\langle i_3|
    \ot\dots\ot 
    |j_{k-1}\rangle\langle i_{k-1}|\ot|j_{1}\rangle\langle i_k|\nonumber\\
    &=|\alpha\rangle\langle j_3|\ot |j_3\rangle\langle j_4|\ot\cdots\ot |j_{k-1}\rangle\langle j_{k}|\ot|\beta\rangle\langle j_k|\,.
\end{align}
The right hand side reads
\begin{align}
    \text{RHS}&=\tau^{-1}\pi\tau(|\alpha i_2 i_3\dots i_{k-1}\rangle\langle \beta i_2 i_3\dots i_{k-1}|)=\tau^{-1}\pi(|\alpha i_2 i_3\dots i_{k-1}\rangle|\beta i_2 i_3\dots i_{k-1}\rangle)\nonumber\\
    &=\tau^{-1}(|\alpha i_2 i_3\dots \beta\rangle| i_2 i_3\dots i_{k-1}i_{k-1}\rangle)=|\alpha i_2 i_3\dots \beta\rangle\langle i_2 i_3\dots i_{k-1}i_{k-1}|)\nonumber\\
    &=|\alpha\rangle\langle i_2|\ot |i_2\rangle\langle i_3|\ot\cdots\ot |i_{k-2}\rangle\langle i_{k-1}|\ot|\beta\rangle\langle i_{k-1}|\,.
\end{align}
Identifying $j_k$ with $i_k$ we see that the left hand side equals the right hand side. 
This ends the proof. 
\end{proof}

\section{Graphical notation for permutations and partial transposes}\label{app:graphicaltrick}
In this Appendix, we represent the partial trace over a subsystem , the partial transposition and the reshuffling operation acting on a tensor and their products graphically.
This is similar to the diagrams of Refs.~\cite{Collins2010,Seilinger2011,Penrose1971}.

\begin{figure}[tbp]
  \centering
  \includegraphics[width=0.5\columnwidth]{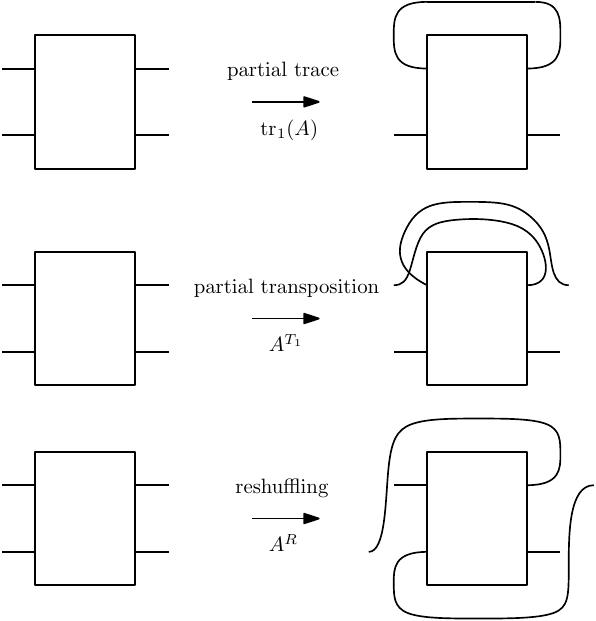}
  \caption{Graphical representations for the action of the partial trace (top), the partial transpose (middle), and reshuffling (bottom) on an operator $A$.}
  \label{fig:partialtransposition.pdf}
\end{figure}

Moreover, we explain a graphical notation to find the operator corresponding to concatenations of permutations and partial transposes. The first thing that one need to know is how to graphically represent single permutations and partial transpose. The idea is given in Fig.~\ref{fig: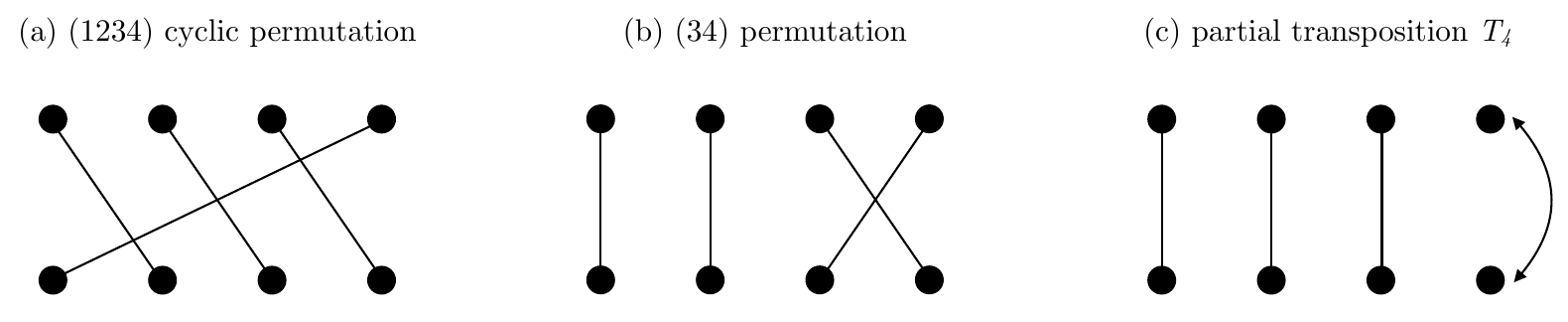}.

If one then wants to combine a permutation and a partial transpose, one needs to first interchange the edges, and then the ket and the bra, as it can be seen in Fig.~\ref{fig: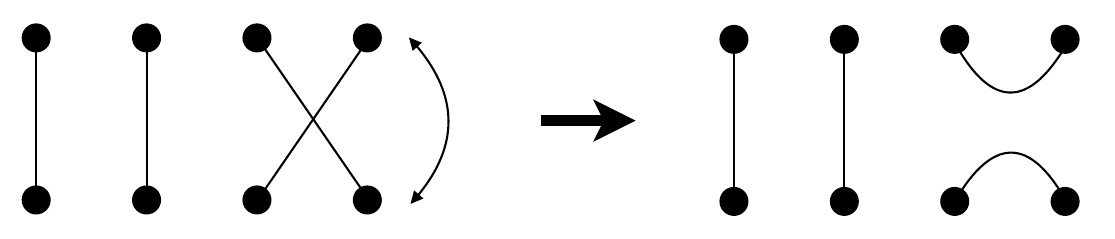}.
Further, if one wants to concatenate several of these permutations, partial transposes, and the combination of the two of them, one needs to arrange their graphical representation vertically and follow the lines, as it can be seen in Fig.~\ref{fig: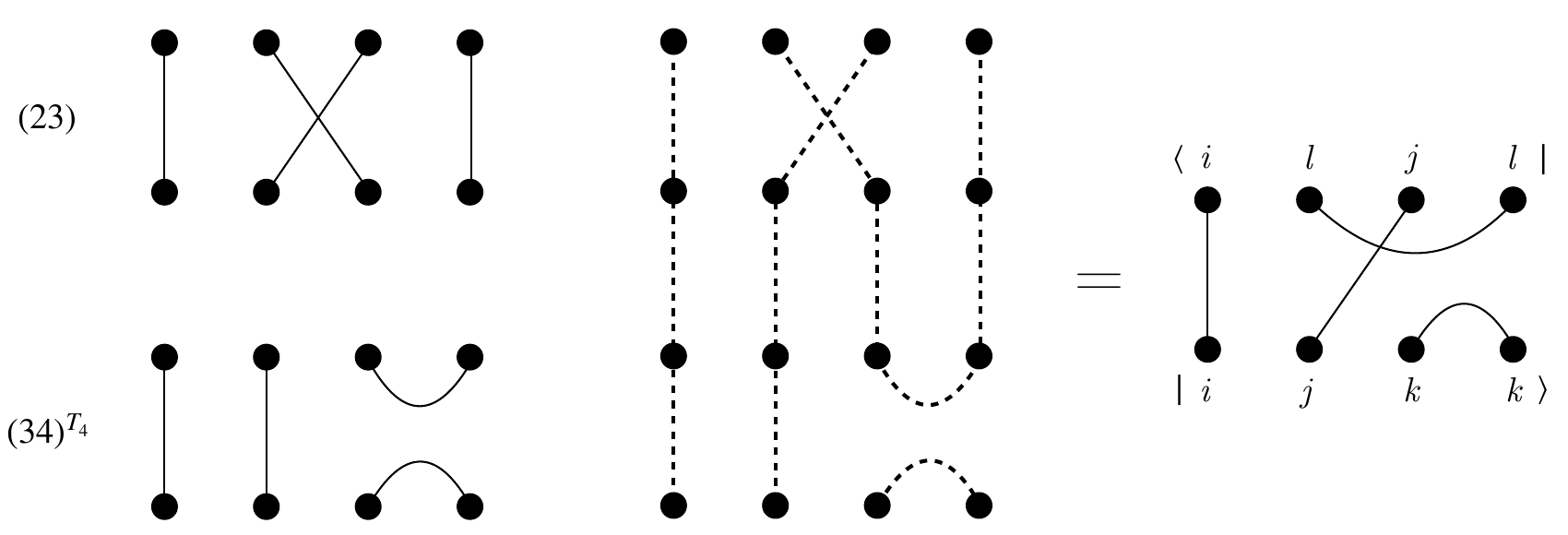}. Recall that the vertical ordering goes from the right to the left. In order to obtain the operator corresponding to this concatenation, one needs to set the same index to the connected edges, while recalling that the lower dots correspond to the ket and the upper ones to the bra.

\begin{figure}[tbp]
  \centering
  \includegraphics[width=0.85\columnwidth]{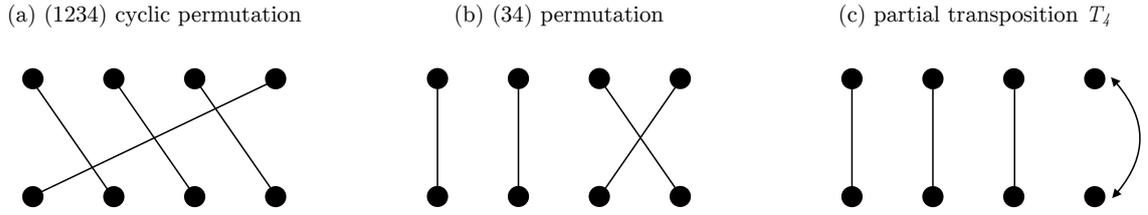}
  \caption{In (a) and (b) it can be seen that permutations are drawn as edges connecting the different parties (dots). In the cyclic permutation $(1234)$ the line of the first particle goes to the second, the second to the third and so on. In the $(34)$ permutation, the lines of the third and fourth particles are swapped. Clearly, elements (a) and (b) represent graphically elements of group algebra of $\mathbb{C}[S_4]$. In (c) it can be seen that a partial transpose consists of interchanging the ket and the bra, i.e., the upper dot (bra) with the lower dot (ket).}
  \label{fig:basic_graphictrick.pdf}
\end{figure}

\begin{figure}[tbp]
  \centering
  \includegraphics[width=0.53\columnwidth]{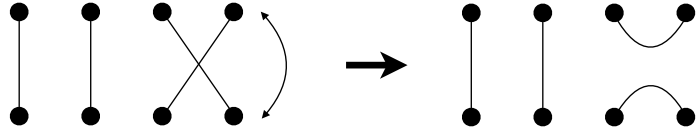}
  \caption{Combinations of a permutation and a partial transpose: $(34)^{T_4}$. The idea graphically is to first interchange the lines (do the permutation), and then the ket and the bra (do the partial transpose). The graphic on the right-hand side graphically represents an element of $\mathcal{WBA}(4,d,k)$. Please notice that in the pictorial representation we do not assign any specific dimension $d$, such diagrams are dimension independent.}
  \label{fig:permutation_plus_PT.pdf}
\end{figure}
\begin{figure}[tbp]
  \centering
  \includegraphics[width=0.85\columnwidth]{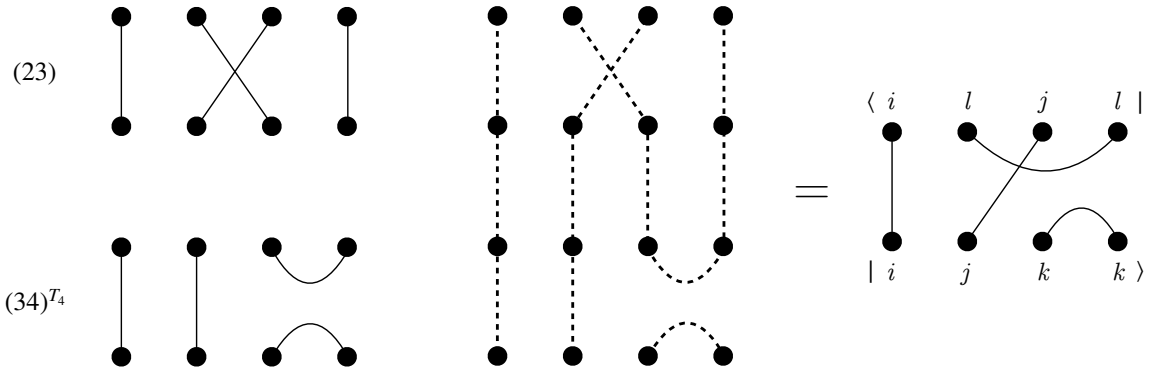}
  \caption{To obtain the operator of the $(34)^{T_4}(23)$, the trick is to follow the lines (from top to bottom). Finally, the connected edges correspond to the same index: the lower dots correspond to the ket and the upper ones to the bra in the sense that $(34)^{T_4}(23)=|ijkk\rangle\langle ilkl|$. The vertical ordering is first $(23)$ and then $(34)^{T_4}$. If one has more permutations to concatenate, the trick is the same, with more diagrams to connect vertically.}
  \label{fig:allcombined_graphictrick.pdf}
\end{figure}

\clearpage
\end{document}